\documentclass[letterpaper]{article} %
\usepackage{aaai2026}  %
\usepackage{times}  %
\usepackage{helvet}  %
\usepackage{courier}  %
\usepackage[hyphens]{url}  %
\usepackage{graphicx} %
\urlstyle{rm} %
\usepackage{natbib}  %
\usepackage{caption} %
\frenchspacing  %
\setlength{\pdfpagewidth}{8.5in} %
\setlength{\pdfpageheight}{11in} %
\usepackage{algorithm}
\usepackage{paralist}
\usepackage{newfloat}
\usepackage{listings}
\DeclareCaptionStyle{ruled}{labelfont=normalfont,labelsep=colon,strut=off} %
\lstset{%
	basicstyle={\footnotesize\ttfamily},%
	numbers=left,numberstyle=\footnotesize,xleftmargin=2em,%
	aboveskip=0pt,belowskip=0pt,%
	showstringspaces=false,tabsize=2,breaklines=true}
\floatstyle{ruled}
\newfloat{listing}{tb}{lst}{}
\floatname{listing}{Listing}
\pdfinfo{
/TemplateVersion (2026.1)
}

\setcounter{secnumdepth}{2} %

\title{Universal Safety Controllers with Learned Prophecies\footnote{Authors are ordered alphabetically.}}
\author {
    Bernd Finkbeiner\textsuperscript{\rm 1},
    Niklas Metzger\textsuperscript{\rm 1},
    Satya Prakash Nayak\textsuperscript{\rm 2},
    Anne-Kathrin Schmuck\textsuperscript{\rm 2}
}
\affiliations {
    \textsuperscript{\rm 1}CISPA Helmholtz Center for Information Security, Saarbr\"ucken, Germany\\
    \textsuperscript{\rm 2}Max Planck Institute for Software Systems, Kaiserslautern, Germany\\
    \{finkbeiner, niklas.metzger\}@cispa.de, \{sanayak, akschmuck\}@mpi-sws.org
}

\usepackage{ltl}
\usepackage{mathtools}
\usepackage{xspace}
\usepackage{algorithm}
\usepackage[noend]{algpseudocode}
\usepackage{enumitem}
\usepackage{amssymb}
\usepackage{amsmath}
\usepackage{amsthm}
\usepackage{graphicx}
\usepackage{amsfonts} %
\usepackage{stmaryrd} %
\usepackage{xcolor}
\usepackage{booktabs}
\usepackage{stmaryrd}
\usepackage{thm-restate}

\definecolor{linkC}{HTML}{710580}

\usepackage{tablefootnote}
\usepackage{makecell}
\usepackage{todonotes}
\usepackage{comment}
\usepackage{subcaption}
\usepackage{xcolor}
\usepackage{listings}
\usepackage{relsize}
\usepackage{thm-restate}
\usepackage{bussproofs}
\usepackage{booktabs}
\usepackage{pifont}%
\usepackage{multirow}
\usepackage[capitalise]{cleveref}
\crefname{theorem}{Thm.}{Theorems}
\Crefname{theorem}{Thm.}{Theorems}
\crefname{example}{Ex.}{Examples}
\Crefname{example}{Ex.}{Examples}
\crefname{equation}{}{}
\Crefname{equation}{}{}
\crefname{figure}{Fig.}{Figs.}
\Crefname{figure}{Fig.}{Figs.}
\crefname{algorithm}{Algo.}{Algos.}
\Crefname{algorithm}{Algo.}{Algos.}
\crefname{definition}{Def.}{Defs.}
\Crefname{definition}{Def.}{Defs.}
\crefname{corollary}{Cor.}{Cors.}
\Crefname{corollary}{Cor.}{Cors.}
\crefname{section}{Sec.}{Secs.}
\Crefname{section}{Sec.}{Secs.}

\newcommand\donotshow[1]{}

\usetikzlibrary{arrows}
\usepackage{csquotes}

\newtheorem{theorem}{Theorem}
\newtheorem{corollary}{Corollary}
\newtheorem{definition}{Definition}

\newtheorem{example}{Example}
\newtheorem{lemma}{Lemma}

\makeatletter
\newcommand{\superimpose}[2]{{%
		\ooalign{%
			\hfil$\m@th#1\@firstoftwo#2$\hfil\cr
			\hfil$\m@th#1\@secondoftwo#2$\hfil\cr
		}%
}}
\makeatother

\newcommand{\nat}{\mathbb{N}}

\makeatletter
\DeclareFontFamily{OMX}{MnSymbolE}{}
\DeclareSymbolFont{MnLargeSymbols}{OMX}{MnSymbolE}{m}{n}
\SetSymbolFont{MnLargeSymbols}{bold}{OMX}{MnSymbolE}{b}{n}
\DeclareFontShape{OMX}{MnSymbolE}{m}{n}{
    <-6>  MnSymbolE5
   <6-7>  MnSymbolE6
   <7-8>  MnSymbolE7
   <8-9>  MnSymbolE8
   <9-10> MnSymbolE9
  <10-12> MnSymbolE10
  <12->   MnSymbolE12
}{}
\DeclareFontShape{OMX}{MnSymbolE}{b}{n}{
    <-6>  MnSymbolE-Bold5
   <6-7>  MnSymbolE-Bold6
   <7-8>  MnSymbolE-Bold7
   <8-9>  MnSymbolE-Bold8
   <9-10> MnSymbolE-Bold9
  <10-12> MnSymbolE-Bold10
  <12->   MnSymbolE-Bold12
}{}

\let\llangle\@undefined
\let\rrangle\@undefined
\DeclareMathDelimiter{\llangle}{\mathopen}%
                     {MnLargeSymbols}{'164}{MnLargeSymbols}{'164}
\DeclareMathDelimiter{\rrangle}{\mathclose}%
                     {MnLargeSymbols}{'171}{MnLargeSymbols}{'171}
\makeatother

\definecolor{dkcyan}{rgb}{0.1, 0.3, 0.3}
\definecolor{dkgreen}{rgb}{0,0.3,0}
\definecolor{olive}{rgb}{0.5, 0.5, 0.0}
\definecolor{dkblue}{rgb}{0,0.1,0.5}

\definecolor{col:ln}{rgb}  {0.1, 0.1, 0.7}
\definecolor{col:str}{rgb} {0.8, 0.0, 0.0}
\definecolor{col:db}{rgb}  {0.9, 0.5, 0.0}
\definecolor{col:ours}{rgb}{0.0, 0.7, 0.0}

\definecolor{lightgreen}{RGB}{170, 255, 220}
\definecolor{darkbrown}{RGB}{121,37,0}

\colorlet{listing-comment}{gray}
\colorlet{operator-color}{darkbrown}
\colorlet{comment-color}{black!50}

\lstdefinelanguage{custom-lang}{
	keywords={let, in, where, match, with, when, if, then, else, for, repeat, return, to, do, from},
	keywordstyle=[1]\color{dkblue},
	morekeywords=[2]{verify, systemToNBA, LTLtoNBA, eProduct, uProduct},
	keywordstyle=[2]\color{dkgreen},
    morekeywords=[3]{underApprox, overApprox},
	keywordstyle=[3]\color{darkbrown},
	comment=[l][\color{comment-color}]{//},
	literate=%
	{=}{{{\color{operator-color}=}}}1
	{|}{{{\color{dkblue}|}}}1
	{:}{{{\color{dkblue}:}}}1
	{:=}{{{\color{dkblue}:=}}}1
    {@}{ }1
}

\lstdefinestyle{default}{
	escapeinside={(*}{*)},
	basicstyle=\ttfamily\fontsize{9.3}{10.3}\selectfont,
	columns=fullflexible,
	commentstyle=\sffamily\color{black!50!white},
	framexleftmargin=1em,
	framexrightmargin=1ex,
	keepspaces=true,
	keywordstyle=\color{dkblue},
	mathescape,
	numbers=left,
	numberblanklines=false,
	numbersep=1.25em,
	numberstyle=\relscale{0.8}\color{gray}\ttfamily,
	showstringspaces=true,
	stepnumber=1,
	xleftmargin=2em,
}

\lstnewenvironment{code}[1][]
{\small
	\lstset{
		style=default, 
		language=custom-lang,
		#1
	}
}
{}

\lstdefinelanguage{example-lang}{
	keywords={while,do},
	keywordstyle=[1]\bfseries,
	comment=[l][\color{comment-color}]{//},
	literate=%
	{<-}{{{\color{dkblue}$\leftarrow$}}}1
	{@}{ }1
}

\lstdefinestyle{example-style}{
	escapeinside={(*}{*)},
	basicstyle=\ttfamily\fontsize{8.4}{9.7}\selectfont,
	columns=fullflexible,
	commentstyle=\sffamily\color{black!50!white},
	framexleftmargin=0em,
	framexrightmargin=0ex,
	keepspaces=true,
	keywordstyle=\color{dkblue},
	mathescape,
	numbers=left,
	numberblanklines=false,
	showstringspaces=true,
	stepnumber=1,
	xleftmargin=0em,
	numbers=none
}

\lstnewenvironment{exampleCode}[1][]
{\small
	\lstset{
		style=example-style, 
		language=example-lang,
		#1
	}
}
{}

\usepackage{scalefnt}

\newcounter{claim} 
\renewcommand{\theclaim}{\arabic{claim}}

\crefname{claim}{claim}{claims}

\renewenvironment{proof}[1][Proof]{%
    \par\noindent\textit{#1.}\hspace{0.5em}\ignorespaces%
}{%
    \hfill$\qed$%
    \par\vspace{0.5em}%
}

\newcommand{\abs}[1]{\left\lvert#1\right\rvert}

\newcommand{\lang}{\mathcal{L}}
\newcommand{\dom}{\mathit{dom}}

\newcommand{\ap}{\mathtt{AP}}
\newcommand{\trace}{\gamma}

\newcommand{\traces}{\mathit{Traces}}

\newcommand{\globally}{\LTLglobally}
\newcommand{\eventually}{\LTLfinally}
\newcommand{\until}{\ U\ }

\newcommand{\nextt}{\LTLnext}

\newcommand{\architecture}{\mathtt{Arc}}
\newcommand{\proc}{\text{Proc}}
\newcommand{\plant}{\texttt{p}}
\newcommand{\plantall}{\mathbb{P}}
\newcommand{\contr}{\texttt{c}}
\newcommand{\contrall}{\mathbb{C}}
\newcommand{\env}{\texttt{e}}
\newcommand{\inputs}[1]{{I_{#1}}}
\newcommand{\outputs}[1]{{O_{#1}}}

\newcommand{\run}{\rho}
\newcommand{\runt}{\mathit{run}}
\newcommand{\runs}{\mathit{Runs}}

\newcommand{\aut}{\mathcal{A}}
\newcommand{\acc}{\Omega}
\newcommand{\safety}{\mathit{safe}}

\newcommand{\reachable}{\texttt{reachable}}

\newcommand{\strat}{\mathcal{M}}
\newcommand{\moore}{\strat}

\newcommand{\stratoutput}{\mathit{out}}

\newcommand{\labelts}{o}

\newcommand{\univcontr}{\mathcal{U}}
\newcommand{\withprophecies}{\univcontr}
\newcommand{\underkappa}{\underline{\kappa}}
\newcommand{\overkappa}{\overline{\kappa}}
\newcommand{\approximation}{\mathcal{W}}
\newcommand{\underapprox}{\underline{\approximation}}
\newcommand{\overapprox}{\overline{\approximation}}

\newcommand{\ltlToAut}{\textsf{LTLtoDSA}}

\newcommand{\synthesize}{\textsf{synthesize}}

\newcommand{\solve}{\textsf{solve}}

\newcommand{\refine}{\mathtt{refine}}
\newcommand{\learnApprox}{\mathtt{learnApprox}}
\newcommand{\learnCTL}{\mathtt{learnCTL}}

\newcommand{\compose}{\mathtt{compose}}

\newcommand{\prophecy}{\theta}

\newcommand{\prophecyall}{\mathbb{F}}

\newcommand{\bracket}[1]{\langle #1\rangle}

\newcommand{\assign}{\mathit{asgn}}
\newcommand{\overload}{\mathit{overload}}
\newcommand{\task}{\mathit{task}}
\newcommand{\busy}{\mathit{busy}}

\newcommand{\prototype}{\textsc{UCLearn}}{}
\newcommand{\unicon}{\textsc{unicon}}{}

\lstdefinelanguage{custom-lang}{
	keywords={Input, Output, let, in, match, with, when, if, then, else, elif, for, to, do, rec, return, new, not, and, while, each, break},
	keywordstyle=[1]\color{dkblue}\bfseries,
	morekeywords=[2]{append, Set, Dict, Queue, pop, push, add, contains},
	keywordstyle=[2]\sffamily,
	morekeywords=[3]{generalController, existentialProjection, composition, isMember, stepComposition, modifiedPlant, h, isRealizable, synthesize},
	keywordstyle=[3]\color{dkcyan}\ttfamily,
	comment=[l][\color{comment-color}]{//},
	literate=%
	{=}{{{\color{operator-color}=}}}1
	{<-}{{{\color{operator-color}$\leftarrow$}}}1
	{|}{{{\color{dkblue}$\mid$}}}1
	{:}{{{\color{dkblue}:}}}1
	{:=}{{{\color{dkblue}:=}}}1
	{@}{ }1
}

\lstdefinestyle{default}{
	escapeinside={(*}{*)},
	basicstyle=,
	columns=fullflexible,
	commentstyle=\sffamily\color{black!50!white},
	framexleftmargin=1em,
	framexrightmargin=1ex,
	keepspaces=true,
	keywordstyle=\color{dkblue},
	mathescape,
	numbers=left,
	numberblanklines=false,
	numbersep=0.5em,
	numberstyle=\relscale{0.75}\color{gray}\ttfamily,
	showstringspaces=true,
	stepnumber=1,
	xleftmargin=1.2em,
}

\lstnewenvironment{mycode}[1][]
{\small
	\lstset{
		style=default, 
		language=custom-lang,
		#1
	}
}
{}

\graphicspath{{figures/}} 

\begin{document}

\maketitle              
\begin{abstract}
\emph{Universal Safety Controllers (USCs)} are a promising logical control framework that guarantees the satisfaction of a given temporal safety specification when applied to any realizable plant model. Unlike traditional methods, which synthesize one logical controller over a given detailed plant model, USC synthesis constructs a \emph{generic controller} whose outputs are conditioned by plant behavior, called \emph{prophecies}. Thereby, USCs offer strong generalization and scalability benefits over classical logical controllers. However, the exact computation and verification of prophecies remain computationally challenging. 
In this paper, we introduce an approximation algorithm for USC synthesis that addresses these limitations via learning. Instead of computing exact prophecies, which reason about sets of trees via automata, we only compute under- and over-approximations from (small) example plants and infer computation tree logic (CTL) formulas as representations of prophecies. The resulting USC generalizes to unseen plants via a verification step and offers improved efficiency and explainability through small and concise CTL prophecies, which remain human-readable and interpretable. Experimental results demonstrate that our learned prophecies remain generalizable, yet are significantly more compact and interpretable than their exact tree automata representations.

\end{abstract}

\section{Introduction}\label{sec:introduction}

The automatic construction of correct-by-design systems is central to both \emph{reactive synthesis}, which derives correct implementations from temporal logic specifications, and \emph{supervisory control}, which derives correct restrictions of existing open systems (plants). In practice, control tasks often combine both perspectives: behavioral goals are specified in logic, while the physical system, i.e., robots, sensors, or environments, is modeled as a plant. This has led to a large variety of synthesis approaches that combine specification automata with plant models and solve the resulting $\omega$-regular game to derive a correct-by-design controller (see books and surveys by \citet{tabuada2009verification,belta2017formal,Review24_FMandControl_YinGaoYu}).
These approaches are, however, known to face major scalability challenges: exploring all plant behaviors in a model can lead to an intractable explosion in the state space during synthesis. 
Even more importantly, when the plant is only known at runtime, or evolves over time, then the usual state exploration becomes impractical very quickly. 

These severe limitations have motivated the introduction of \emph{Universal Safety Controllers (USCs)} by \citet{tacaspaper}, which shift the focus from plant-specific synthesis to the computation of a \emph{universal controller} derived from the specification alone, whose decisions are conditioned by so-called \emph{prophecies}. 
As a result, USCs promise two major advantages: (1) generalization: USCs provide a correct solution for all realizable plant models, and (2) computational efficiency: by focusing on the specification, USC synthesis promises to reduce the computational burden of utilizing a complete plant model.
The latter computational benefit, however, is only achieved if prophecies are small and concise. %
\citet{tacaspaper} use tree automata as prophecies which encode all possible future branching structure of the plant potentially relevant for the specification.
While preserving correctness and completeness, this technique renders both prophecy synthesis and prophecy verification computationally very intense.

This paper addresses this limitation of the automata-based approach by introducing the first \emph{learning-based} algorithm for prophecy construction.
Concretely, we obtain prophecies in computation tree logic (CTL)~\cite{DBLP:conf/lop/ClarkeE81}, which we learn from a small set of representative (nominal) plant models drawn from the application domain. The resulting USC generalizes to unseen plants via a verification step and offers improved efficiency and explainability through small and concise CTL prophecies, which remain generalizable and human-readable. %

\subsection{Motivating Example}\label{sec:introduction:example}

As a motivating example, we consider the problem of synthesizing a simple load-balancing scheduling controller, which assigns incoming tasks to two processing units (CPUs).
The load-balancing controller is specified with the following temporal formula:
\begin{equation}\label{equ:spec:exp}
 \varphi \coloneqq \LTLglobally (\task \rightarrow \LTLnext (\assign_1 \vee \assign_2))\, \wedge\, \LTLglobally \neg \overload,
\end{equation}
where $\mathit{task}$ models the arrival of a new task and is controlled by the \emph{environment}, $\assign_i$ assigns an incoming task to either $\mathit{cpu}_1$ or $\mathit{cpu}_2$ and is controlled by the controller, and $\mathit{overload}$ means that a task was assigned to a busy processor, which is controlled by the \emph{plant}.
The specification states that there should never be an $\mathit{overload}$, i.e., the controller should never assign a task to a processor that is already busy with another task.%

Ultimately, we are interested in applying the scheduler to a concrete system implementation, such as the system described by the plant model in \Cref{fig:plant}. In this particular system implementation, each of the processors is busy for exactly one time-step once assigned a task. An example of a controller that satisfies our specification for this plant, and would typically be found by a standard synthesis algorithm, is the round-robin controller, which alternates between assigning new tasks to the first and second CPU. The disadvantage of this controller, however, is that it \emph{only} works correctly for our \emph{specific} plant. By contrast, a universal controller is immediately applicable to an entire set of plants.

\begin{figure}[t]
        \centering
        \resizebox{0.9\linewidth}{!}{
        \tikzstyle{state}=[draw, circle, fill=none, minimum width=0.7cm, 
minimum height = 0.7cm,
align=center, thick]

\begin{tikzpicture}[->,>=stealth',shorten >= 1pt,auto]

\newcommand\x{2}
\newcommand\y{.7}
\newcommand\bend{15}

\node[state] (p0){%
    $~s_0$
};
\node[state] (p1) [above right = \y and \x of p0]{%
    $~s_1$
};

\node[state] (p2) [below right = \y and \x of p0]{%
    $~s_2$
};

\node[state] (p3)[right = 2*\x of p0]{%
    $~s_3$
};

\node [below left=0.1em and 0.1em of p0] (l0) {$\emptyset$};
\node [above =0.1em of p1] (l1) {$\{\busy_1\}$};
\node [below =0.1em of p2] (l2) {$\{\busy_2\}$};
\node [above right=-0.1em and -0.1em of p3] (l3) {$\overload\}$};
\node [above right=1.0em and -0.6em of p3] (l4) {$\{\busy_1,\busy_2,$};

\node[state, draw=none] (init)[left = 0.7 of p0]{%
};
\node[state, draw=none] (align)[right = 0.7 of p3]{%
};

\path (init) edge[thick] (p0)
(p0) edge[loop above,thick] node[above, align=center] {%
  $*$
  } (p0)
(p0) edge[bend left=\bend,thick] node[above,xshift = 5pt, yshift= 7pt] {%
  $\assign_1$
  } (p1)
(p0) edge[bend right=\bend,thick] node[below, yshift = -5pt] {%
  $\assign_2$
  } (p2)
(p1) edge[bend left=\bend,thick] node[above, align=center] {%
  $*$
  } (p0)
(p1) edge[bend left=\bend,thick] node[above, xshift = 10pt] {%
  $\assign_1$
  } (p3)
(p1) edge[bend left=\bend,thick] node[right] {%
  $\assign_2$
  } (p2)
(p2) edge[bend right=\bend,thick] node[below] {%
  $*$
  } (p0)
(p2) edge[bend right=\bend,thick] node[below, xshift = 10pt] {%
  $\assign_2$
  } (p3)
(p2) edge[bend left=\bend,thick] node[left] {%
  $\assign_1$
  } (p1)
(p3) edge[loop right,thick] node[right] {%
  $*$
  } ();
\end{tikzpicture}
        }
    \caption{A plant that models the occupancy of processors in the load balancer example. We use $*$ to represent edges that are taken whenever no other edge guard is satisfied.}
    \label{fig:plant}
\end{figure}
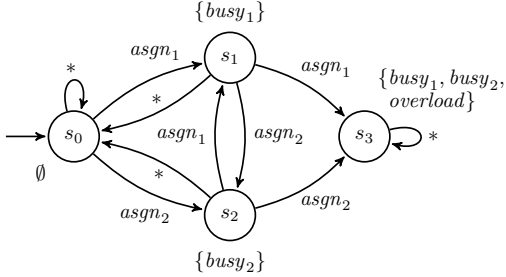

\smallskip
\noindent\textbf{Previous work: automata-based USC synthesis.}
The previous approach to universal controller synthesis, \unicon{} by \citet{tacaspaper}, takes as input only a specification: the USC is computed independently of any specific plant.  For each possible output of the controller, the synthesis algorithm computes the exact condition on the plant under which this output is correct. In our example, both $\assign_1$ and $\assign_2$ might be correct outputs, depending on whether the respective CPU is available, and also depending on the \emph{future} availability of the CPU. For example, if, in a hypothetical plant, $\mathit{cpu}_1$ would become permanently busy if used in the first step, it would only be correct to start with $\assign_2$. \unicon{} computes a tree automaton that recognizes \emph{exactly those} plants where $\assign_1$ (or $\assign_2$, respectively) does not lead to an immediate $\overload$ and where, additionally, a control strategy exists for the state reached by executing $\assign_1$ (or $\assign_2$, respectively).
The tree automaton is non-trivial and difficult to interpret; we omit depicting it here.

\smallskip
\noindent\textbf{New approach: learning-based USC synthesis.}
Our new learning-based approach, \prototype,  takes as input both the specification and a set of \emph{nominal} plants which are representative of the plants the controller will be applied to. Similar to \unicon{}, the prophecies express conditions on the plants for which a certain output should be applied. However, instead of \emph{characterizing} precisely the set of plants for which a particular output is correct, our learned prophecies \emph{separate} plants for which a control output is correct from those where the output is incorrect. This relaxation introduces a freedom of choice in the selection of the separating condition. The precise condition computed by \unicon{} is also one of the separating conditions, but typically, there are separating conditions that are much simpler, easier to verify, and easier to understand. We find such a condition, expressed as a temporal formula, using a learning algorithm for CTL.

\Cref{fig:approx:controller} shows the controller synthesized by \prototype{}~\cite{uclearn} for the specification of the load-balancing controller together with the plant from \Cref{fig:plant} as the (single) nominal plant. The controller has three states. The initial state $q_0$ represents the case that no task has been received yet. Here, $\mathit{no\_asgn}$ is a correct decision independently of the plant. The outputs $\assign_1$ and $\assign_2$ are allowed as long as the respective CPU is not busy. State $q_1$ represents the case that a task has been received and still needs to be assigned to a CPU. The difference to $q_0$ is that  $\mathit{no\_asgn}$ is no longer an option. Finally, $q_2$ represents the case that an overload has occurred. This will never happen with the plant from \Cref{fig:plant}, although it might happen on plants which the synthesis algorithm has not seen. In $q_2$, no controller output is correct, because the specification has already been violated.

The synthesized controller is not only correct for the given plant, it also generalizes to other plants that have sufficient CPU availability and use $\mathit{busy}$ like in our example.
\prototype{} thus produces controllers that are more general than the plant-specific controllers computed by classic algorithms from reactive synthesis and supervisory control, and simpler than the universal controllers produced by \unicon{}.

\begin{figure}[t]
    \newcommand{\arrow}{\twoheadleftarrow}
    \centering
    \resizebox{1\linewidth}{!}{
        \tikzstyle{state}=[draw, circle, fill=none, minimum width=0.7cm, 
minimum height = 0.7cm,
align=center, thick]

\begin{tikzpicture}[->,>=stealth',shorten >= 1pt,auto]

\newcommand\x{2}
\newcommand\y{1}
\newcommand\bend{15}

\node[state] (p0){%
    $~q_0$
};
\node[state] (p1) [right = \x of p0] {%
    $~q_1$
};

\node[state] (p2) [below right = \y and 0.5*\x of p0]{%
    $~q_2$
};

\node[state, draw=none] (init)[above left = 0.7 and 0.7 of p0]{%
};

\path (init) edge[thick] (p0)
(p0) edge[thick,bend left=\bend] node[above] {%
  $\task$
  } (p1)
(p0) edge[loop above,thick] node[above] {%
  $\neg \task$
  } (p0)
(p0) edge[thick] node[left, xshift = -0.1em] {%
  $\overload$
  } (p2)
(p1) edge[loop above,thick] node[above] {%
  $\task$
  } (p1)
(p1) edge[thick, bend left = \bend] node[below] {%
  $\neg\task$
  } (p0)
(p1) edge[thick] node[right] {%
  $\overload$
  } (p2)

(p2) edge[loop right,thick] node {%
  $*$
  } ()
;

\node [above left= 0.05 and 0.85 of p0] (l1) {${\mathit{no\_asgn}}\leftarrow \bracket{\top}$};
\node [above left= -0.4 and 0.2 of p0] (l0) {${\assign_1}\leftarrow \bracket{\neg \busy_1}$};
\node [below left = -0.4 and 0.2 of p0] (l00) {${\assign_2}\leftarrow \bracket{\neg \busy_2}$};
\node [above right= 0.05 and 0.2 of p1] (l11) {${\mathit{no\_asgn}}\leftarrow \bracket{\bot}$};
\node [above right= -0.4 and 0.2 of p1] (l0) {${\assign_1}\leftarrow \bracket{\neg \busy_1}$};
\node [below right = -0.4 and 0.2 of p1] (l00) {${\assign_2}\leftarrow \bracket{\neg \busy_2}$};

\node [left = 0em of p2, xshift=-3em,yshift=2em] (p22) {};

\node [below=0.2em of p22, xshift=-0.2em] (l22) {${\mathit{no\_asgn}}\leftarrow \bracket{\bot}$};
\node [below=1.2em of p22] (l2) {$\assign_1 \leftarrow\bracket{\bot}$};
\node [below=2.2em of p22] (l22) {$\assign_2\leftarrow\bracket{\bot}$};

\end{tikzpicture}
    \vspace*{0.5cm}}
    \caption{An approximate USC with CTL prophecies.} 
    \label{fig:approx:controller}
\end{figure}
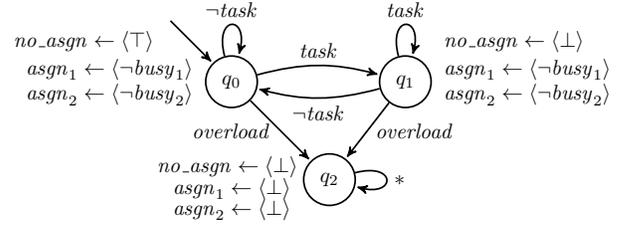

\subsection{Overview}

\Cref{fig:overview:uclearn} gives an overview of \prototype{}. The general idea is that an approximate USC is initially obtained from the specification and then continuously refined based on the plants provided to the algorithm.
An approximate USC provides for each state and controller output an over- and an under-approximation of the prophecy: the under-approximation $\underline{\kappa}$ characterizes plants for which the output is guaranteed to be correct, the over-approximation $\overline{\kappa}$ characterizes plants for which the output is not guaranteed to be incorrect. As shown in the upper part of \Cref{fig:overview:uclearn}, we initialize the process by translating the safety LTL formula to an automaton, and setting $\underline{\kappa}=\emptyset$, i.e., the controller output is guaranteed to be correct for the empty set, and  $\overline{\kappa}=\plantall$, i.e., the output is not guaranteed to be incorrect for all plants. %

The main refinement loop, framed in red in \Cref{fig:overview:uclearn}, applies the current approximation $\mathcal{W}$ to a new nominal plant and uses the result to improve $\mathcal{W}$. We first concretize the universal controller to an explicit controller for the new plant by choosing the outputs on which the prophecy conditions are satisfied. We then verify if the explicit controller is correct for the plant: if yes, $\mathcal{W}$ already covers the plant and there is no need to refine. If the controller is incorrect, we synthesize a correct controller and obtain positive and negative samples based on the states in the plant where the controller acts correctly and incorrectly, respectively. A separate CTL learning algorithm ($\learnCTL$) then finds a formula that separates the new sets of positive/negative samples.

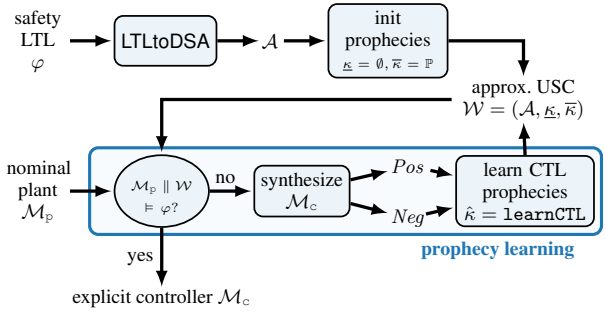
\begin{figure}[t]
    \centering
    \resizebox{0.92\textwidth}{!}{
    \begin{minipage}[c]{1.1\textwidth}
        \usetikzlibrary{trees,decorations,arrows,automata,shadows,positioning,plotmarks,backgrounds,shapes}
\usetikzlibrary{calc,matrix,fit,petri,decorations.markings,decorations.pathmorphing,patterns,intersections,decorations.text}

\tikzstyle{state}=[draw, rectangle, minimum width=.8cm, 
minimum height = .8cm, align=center, thick, draw=black!80]
\tikzstyle{emptiness}=[draw, ellipse, minimum width=.8cm, 
minimum height = .8cm, align=center, thick, draw=black!80]
\tikzstyle{nocorner}=[draw = none, rectangle, minimum width=0cm, 
minimum height = 0cm, align=center, thick, inner sep=0]
\usetikzlibrary{fit, shapes.geometric}
\newcommand\x{0.7}
\newcommand\y{1.4}
\definecolor{diagramcolor}{RGB}{31,119,180}
\colorlet{red}{diagramcolor}
\newcommand\boxcolor{diagramcolor}
\newcommand\statefill{diagramcolor!8}
\tikzset{
  state/.append style={fill=\statefill, draw=black},
  emptiness/.append style={fill=\statefill, draw=black},
  nocorner/.append style={draw=none}
}
\begin{tikzpicture}[->,>=stealth',shorten >= 1pt,scale=0.8]
\begin{footnotesize}
\node[nocorner,text width=1cm,text centered] (p0) at (0,0) {%
safety\\ LTL\\
  $\varphi$%
};%
\node[state,rounded corners, fill=\statefill] (p1) [right = \x of p0] {%
  $\ltlToAut$%
  };%

\node[nocorner] (p2) [right = \x of p1] {%
  $\mathcal{A}$
  };%

\node[state,rounded corners, fill=\statefill,text width=1.7cm,text centered] (p3) [right = \x of p2] {%
  init\\ prophecies\\
 \tiny{ $\underkappa = \emptyset, \overkappa = \plantall$}
  };%
\node[nocorner] (p4) [below right = 0*\y and 0.3*\x of p3] {%
  approx. USC\\
  $\approximation = (\aut, \underkappa, \overkappa)$
  };%
\node[nocorner,text width=1cm,text centered] (t0) [below = 1*\y of p0] {%
nominal\\ plant\\
$\strat_\plant$%
  };%
\node[state, draw = none, fill=none] (t1draw) [right = 0.6*\x of t0]
{};%
\node[emptiness,fill=\statefill,inner sep=1pt,text width=1cm] (t1) [right = \x of t0] {%
 \tiny{ $\strat_\plant \parallel \approximation$ $\vDash \varphi ?$}
  };

\node[state,rounded corners, fill=\statefill] (t2) [right = \x of t1] {%
synthesize\\
$\strat_\contr$%
  };

\node[nocorner] (t3) [above right = -0.1*\y and 0.9*\x of t2] {%
$\mathit{Pos}$
  };

  \node[nocorner] (t4) [below right = -0.1*\y and 0.9*\x of t2] {%
 $\mathit{Neg}$%
  };

\node[state,rounded corners, fill=\statefill,text width=2cm,text centered] (t5) [right = 2.4*\x of t2] {%
  learn CTL\\ prophecies\\
  $\hat{\kappa} = \learnCTL$%
  };

\node[nocorner] (u2) [below = .7*\y of t1] {%
  explicit controller 
 $\strat_\contr$%
  };

\node[nocorner] (u4) [below = .15*\y of t5, xshift=-4mm] {%
\textbf{\textcolor{\boxcolor}{prophecy learning}}
};

\begin{pgfonlayer}{background}
\node (box1) [state,rounded corners, fill=red!3, fit={(t1)(t5) (t4) (t3) (t1draw)}, draw=\boxcolor, line width = 0.05cm, yshift=0] {};
\end{pgfonlayer}

\node[coordinate, draw = none] (draw1) [below = 0.3 of p4]{};

\draw [thick,line width=1.5pt,-latex](p3.east) -|  (p4.north);
\draw [thick,line width=1.5pt,-latex](t5.north) --  (p4.south);
\draw [thick,line width=1.5pt,-latex]($(p4.west)+(-0.2,0)$) -|  (t1.north);

\path (p0) edge[-latex,line width=1.5pt] node[above] {%
       $ $
      }  (p1);

\path (p1) edge[thick, -latex,line width=1.5pt] node[above] {%
       $ $
      }  (p2);
      
\path (p2) edge[thick, -latex,line width=1.5pt] node[above] {%
       $ $
      }  (p3);

\path (t0) edge[thick, -latex,line width=1.5pt] node[above] {%
       $ $
      }  (t1);

\path (t1) edge[thick, -latex,line width=1.5pt] node[above, xshift=-1mm] {%
       no
      }  (t2);
  \path (t1) edge[thick, -latex,line width=1.5pt] node[left, xshift=-0mm] {%
       yes
      }  (u2);
\path (t2) edge[thick, -latex,line width=1.5pt] node[above] {%
       $ $
      }  (t3);
\path (t2) edge[thick, -latex,line width=1.5pt] node[above] {%
       $ $
      }  (t4);
\path (t3) edge[thick, -latex,line width=1.5pt] node[above] {%
       $ $
      }  (t5);
\path (t4) edge[thick, -latex,line width=1.5pt] node[above] {%
       $ $
      }  (t5);

\end{footnotesize}

\end{tikzpicture}
    \end{minipage}
    }
\caption{An overview of UCLearn.}
\label{fig:overview:uclearn}
\end{figure}

\subsection{Related Work}

The generalization and adaptability of the universal controller stem from its \emph{permissiveness}—its ability to represent a set of control strategies rather than a single one. Permissiveness has been widely studied in supervisory control~\cite{cassandras2021introduction} and reactive synthesis~\cite{bernet2002:permissiveStrategies,bouyer2011:measuringpermissiveness,FremontSeshi_Improvisation,Klein2015:mostGeneralController,AnandNS23}, though typically under a fixed plant model with limited adaptability. Other works consider strategies correct for sets of plants—e.g., dominant~\cite{DammF14,FinkbeinerP22} and admissible~\cite{Berwanger07,BassetRS17} strategies, or restricted plant classes~\cite{AnandMNS23,BrenguierRS17}—but they yield a single strategy correct for all plants. In contrast, our method synthesizes a controller that adapts its strategy to the given plant model.

Temporal logic specification learning has been well studied~\cite{neider2018learning,raha2023synthesizing,valizadeh2024ltl,pommellet2024sat}, but these works focus on inferring specifications from samples, not using them for control. Learning has been applied to LTL synthesis without plant models~\cite{KretinskyMPZ25,BalachanderFR23}, whereas we learn CTL formulas that characterize plant behavior.
We also leverage prophecy variables~\cite{AbadiL91,BeutnerF22} to anticipate future plant actions during universal controller synthesis, following the framework introduced in~\cite{tacaspaper}.

\section{Preliminaries}
\noindent\textbf{System Architectures.}
For every set $X$, we write $X^*$ and $X^\omega$ to denote the sets of finite and respectively infinite, sequences of elements of $X$, and let $X^\infty = X^* \cup  X^\omega$. 
For $\run\in X^\infty$, we denote $\abs{\run}\in\nat\cup \{\infty\}$ the length of $\run$, and define $\dom(\run)\coloneqq \{0,1,\ldots,\abs{\run}-1\}$.
For $\run = x_0x_1\cdots\in X^\infty$ and $i\in\dom(\run)$, we write $\run[i]$ to denote its $i$-th element, i.e., $x_i$.
For $\run\in X^\infty$ and some set $Y$, we write $\run\downarrow_Y$ to denote the sequence $\run'$ with $\run'[i] = \run[i]\cap Y$.
We focus on systems with three processes, $\proc = \{\contr,\plant,\env\}$, i.e., a controller, a plant, and an environment and $\ap = \outputs{\env}\uplus\outputs{\contr}\uplus\outputs{\plant}$ partitioned into the output propositions of the processes.
This yields an architecture $\architecture = (\inputs{\env},\outputs{\env},\inputs{\contr},\outputs{\contr},\inputs{\plant},\outputs{\plant})$ over $\ap$ where the input propositions of a process $i\in\proc$ are output propositions of all other processes, i.e., $\inputs{i} = \bigcup_{j\neq i}\outputs{j}$.

\smallskip
\noindent\textbf{Traces and Specifications.}
We fix alphabet $\Sigma = 2^{\ap}$ to the power set of the atomic propositions $\ap$.
A \emph{trace} over $\Sigma$ is a sequence $\trace\in \Sigma^\omega$.
A \emph{specification} $\varphi$ specifies some restrictions on the traces. 
We write $\trace \vDash\varphi$ to denote that the trace $\trace$ satisfies the specification $\varphi$.
The language $\lang(\varphi)$ of a specification $\varphi$ represents the set of traces that satisfy~$\varphi$.
Linear-time temporal logic~(LTL)~\cite{Pnueli77} specifications over atomic propositions $\ap$ are defined by 
$ \varphi, \psi ::= \alpha\in\ap ~ | ~ \neg \varphi ~ | ~ \varphi \lor \psi ~ | ~ \varphi \land \psi ~ | ~ \nextt \varphi ~ | ~ \varphi \until \psi ~ | ~ \eventually \varphi ~ | ~ \globally \varphi$.
Computation tree logic (CTL)~\cite{DBLP:conf/lop/ClarkeE81} specifies temporal logics over trees by differentiating between path formulas and state formulas. Paths are quantified with $\forall \varphi$ ($\varphi$ holds on all paths) and $\exists \varphi$ ($\varphi$ holds on some path).
Additionally, every temporal modality must be preceded by a path quantifier.
A system is said to satisfy an LTL or CTL specification $\varphi$ as defined by the usual semantics; we refer to~\cite{baier2008principles} for more details. 

\smallskip
\noindent\textbf{Process Strategies.}
We model a strategy for process ${i}$ as a Moore machine~$\moore = (S, s_0,\tau,\labelts)$ over inputs $\inputs{i}$ and outputs $\outputs{i}$, consisting of a finite set of states $S$, an initial state $s_0 \in S$, a transition function $\tau: S \times 2^\inputs{i} \rightarrow S$ over inputs, and an output labeling function $\labelts: S \rightarrow 2^\outputs{i}$.
For a finite input sequence $\trace = \alpha_0\alpha_1\cdots\alpha_{k-1}\in (2^{\inputs{i}})^*\!$, $\moore$ produces a finite path $s_0s_1\cdots s_k$ and an output sequence $\labelts(s_0)\labelts(s_1)\cdots\labelts(s_k)\in (2^{\outputs{i}})^*$ such that $\tau(s_j, \alpha_j) = s_{j+1}$.
We write $\stratoutput(\strat,\trace)$ to denote $\labelts(s_k)$.
Similarly, for an infinite input sequence $\trace\in (2^{\inputs{i}})^\omega\!$, $\strat$ produces an infinite path $s_0s_1\cdots$ and an infinite output sequence $\labelts(s_1)\labelts(s_1)\cdots\in (2^{\outputs{i}})^\omega$.
$\traces(\strat)$ denotes the set of all traces $\trace\in\Sigma^\omega$ such that for input sequence $\trace\downarrow_{\inputs{i}}$, $\strat$ produces the output sequence $\trace\downarrow_{\outputs{i}}$.
We say that $\strat$ satisfies a specification~$\varphi$, denoted $\strat \models \varphi$, if $\trace \models \varphi$ holds for all trace $\trace\in\traces(\strat)$.
For a strategy $\moore = (S, s_0,\tau,\labelts)$, we write $\moore(s)$ to denote the strategy $(S, s,\tau,\labelts)$ with the same transition and output labeling function but with initial state $s$.
The set of all plant strategies and controller strategies are $\plantall$ and $\contrall$, respectively.
The parallel composition $\moore_{i} \parallel \moore_j$ of two strategies $\moore_{i} = (S_i,s^i_0,\tau_i,o_i)$, $\moore_j = (S_j,s^j_0,\tau_j,o_j)$ of processes $i, j\in\proc$ is a strategy, i.e., a Moore machine, $(S,s_0,\tau,o)$ over inputs $(\inputs{i} \cup \inputs{j})\setminus(\outputs{i}\cup \outputs{j})$ and outputs $\outputs{i} \cup \outputs{j}$ with $S = S_i \times S_j$, $s_0=(s^i_0,s^j_0)$, 
$\tau((s,s'),\sigma) = (\tau_i(s,(\sigma \cup \labelts_j(s')) \cap \inputs{i}),\tau_j(s',(\sigma \cup \labelts_i(s))\cap\inputs{j}))$, and 
$\labelts((s,s')) = \labelts_i(s) \cup \labelts_j(s')$.

\smallskip
\noindent\textbf{$\omega$-Automata.}
An $\omega$-\emph{automaton} $\aut$ over alphabet $\Sigma$ is a tuple $(Q,q_0,\delta,\Omega)$ consisting of a finite set of states $Q$, an initial state $q_0\in Q$,
a transition function $\delta \colon Q\times \Sigma \rightarrow Q$, and an acceptance condition $\acc\subseteq Q^\omega$.
The unique \emph{run} of $\aut$ from state $q$ on some trace $\trace\in \Sigma^\infty$, denoted by $\runt(\aut,q,\trace)$, is a sequence of states $\run\in Q^\infty$ with $\abs{\run} =\abs{\trace}+1$, $\run[0] = q$, and $\delta(\run[i],\trace[i]) = \run[i+1]$ for all $i\in\dom(\trace)$. 
A run $\run$ is \emph{accepting} if $\run\in \acc$.
The language $\mathcal{L}(\aut)$ is the set of all traces $\trace$ for which the unique run $\runt(\aut,q_0,\trace)$ is accepting.
Furthermore, we write $\aut(q)= (Q,q,\delta,\Omega)$ to denote the automaton with the same transition function and acceptance condition but with initial state $q$, and $\reachable(\aut)$ to denote the set of states that are reachable from the initial state $q_0$ in $\aut$, i.e., there exists a run $\run$ of $\aut$ with $\run[0] = q_0$ and $\run[k] = q$ for some $k\geq 0$.

In safety automata, the acceptance $\acc = \safety(F)$ is given by a set $F\subseteq Q$ of safe states, containing the set of runs which only visits states in $F$.
An LTL specification $\varphi$ is said to be \emph{safety} if it can be expressed as a safety automaton $\aut$.
A standard result in the literature is that every safety LTL specification can be translated into an equivalent deterministic safety automaton with a double exponential blow-up~\cite{safeLtl}.
Let us denote this procedure by $\ltlToAut(\varphi)$.

Given an automaton $\aut = (Q,q_0,\delta,\acc)$ and a Moore machine $\moore$ over inputs $I$ and outputs $O$ with $2^{I\cup O}\subseteq \Sigma$, 
we define the composition $\aut \times \moore = (Q\times S, (q_0,s_0), \delta',\acc')$ as a product automaton with acceptance condition $\acc'$ given by the set of runs whose first component is accepting by $\acc$, and a partial transition function 
$\delta': (Q\times S)\times \Sigma \rightarrow (Q\times S)$ such that $\delta'((q,s),\sigma)$ is defined if and only if $\sigma\cap O = \labelts(s)$ and in that case $\delta'((q,s),\sigma) = (\delta(q,\sigma),\tau(s,\sigma))$.
We write $\runs(\aut,q,\moore)$ to denote the set of runs $\run$ of $\aut$ for which there exists a corresponding run $\run'$ of $\aut\times\moore$ with $\run'[i] = (\run[i],s_i)$ for each $i\geq 0$.
We write $\reachable(\aut\times\moore)$ to denote the set of states $(q,s)$ that are reachable from the initial state $(q_0,s_0)$ in $\aut\times\moore$, i.e., there exists a run $\run$ of $\aut\times\moore$ with $\run[0] = (q_0,s_0)$ and $\run[k] = (q,s)$ for some $k\geq 0$.

\smallskip
\noindent\textbf{Logical Controller Synthesis.}
Logical controller synthesis addresses the problem of, given a specification and a plant, synthesizing a controller that satisfies the specification if executed with the plant. 
For an LTL specification $\varphi$ over $\ap$, and a fixed plant strategy $\strat_\plant$, the goal is to find a controller strategy $\strat_\contr$ such that $\strat_\contr \parallel \strat_\plant \vDash \varphi$.
A plant strategy is \emph{admissible} if such a controller exists.

\section{Universal Controller Synthesis}\label{sec:universal-controller}
In this section, we recall the notion of universal controllers and their synthesis from \cite{tacaspaper}. 
Traditional controller synthesis requires exploring the state space of a given plant. In contrast, we consider a synthesis framework that abstracts away from explicit plants by reasoning over sets of possible plants, called \emph{prophecies}.
A prophecy captures assumptions about the future plant behavior. Instead of solving the synthesis problem for each plant individually, we synthesize a controller that conditions its behavior on verified prophecies for the given plant. This leads to the notion of a \emph{prophecy-annotated controller}\footnote{Note that~\cite{tacaspaper} refers to prophecy-annotated controllers as \emph{universal controllers}. We call a prophecy-annotated controller \emph{universal} iff it is correct and most permissive.} (\emph{prophecy controller} for short), which conditions controller behavior with prophecies.

\begin{definition}[Prophecy-annotated controller] 
    A \emph{prophecy} $\prophecy\subseteq\plantall$ is a set of plants. 
    A \emph{prophecy-annotated controller} $\withprophecies$ over an alphabet $\Sigma$ and a set of prophecies $\prophecyall$ is a tuple $(S, s_0, \tau, \kappa)$, where $S$ is a finite set of states, $s_0 \in S$ is the initial state, $\tau: S \times \Sigma \rightarrow S$ is the transition function, $\kappa: S \times 2^{\outputs{\contr}} \rightarrow \prophecyall$ is the prophecy annotation. 
\end{definition}

Note that, similar to $\omega$-automata, the transition function $\tau$ is defined over all possible alphabets, and hence, $\withprophecies\times\strat$ is well-defined for any strategy $\strat$.
Furthermore, each controller output from a state is associated with a prophecy that must hold for that output to be valid. For an explicit plant, we compute a consistent controller by verifying the prophecies at each state.

\begin{definition}[Consistency] 
    Given a prophecy controller $\withprophecies = (S,\allowbreak s_0,\tau,\kappa)$ and a plant $\strat_\plant$, a controller $\strat_\contr = (S^\contr, s_0^\contr,\tau^\contr,\labelts^\contr)$ is said to be consistent with $\withprophecies$ w.r.t. $\strat_\plant$, denoted by $\strat_\contr\vDash\withprophecies\parallel \strat_\plant$, 
    if for all $(s,s^\plant,s^\contr)\in \reachable(\withprophecies\times(\strat_\contr\parallel\strat_\plant))$, it holds that $\strat_\plant(s^\plant)\in \kappa(s,\labelts^\contr(s^\contr))$.
\end{definition}

We are interested in prophecy controllers that are correct when combined with a consistent controller for a plant.

\begin{definition}[Correctness] 
    Given a specification~$\varphi$, a prophecy controller $\withprophecies$ is \emph{correct} for a plant strategy $\strat_\plant$, if it holds that $\strat_\contr\vDash \withprophecies\parallel \strat_\plant$ implies $\strat_\plant \parallel \strat_\contr \vDash \varphi$.
\end{definition}

It is possible that there does not exist a controller $\strat_\contr$ that is consistent with the prophecy controller $\withprophecies$ for a given plant $\strat_\plant$. In that case, we say $\withprophecies$ is \emph{incompatible} with $\strat_\plant$, denoted by $\withprophecies\parallel\strat_\plant = \emptyset$.
If all correct controllers for an (admissible) plant are consistent with the universal controller, we call such a universal controller \emph{most permissive}.

\begin{definition}[Permissiveness]\label{def:permissive:of:universal}
    Given a specification~$\varphi$, a prophecy controller $\withprophecies$ is \emph{most permissive} for a plant strategy $\strat_\plant$, if it holds that $\strat_\plant \parallel \strat_\contr \vDash \varphi$ implies $\strat_\contr\vDash \withprophecies\parallel \strat_\plant$.
\end{definition}

The goal of universal controller synthesis is to construct a prophecy controller that is correct and most permissive for every plant.
This implies it would produce correct controllers for all admissible plants and would be incompatible for all inadmissible plants. 
However, note that a trivial solution to this problem is a prophecy controller that includes all plants in the prophecies from safe states and excludes all plants in the prophecies from unsafe states.
This is due to the fact that every plant (including inadmissible ones) would satisfy the prophecies as long as the composition is in the safe region, and would become incompatible once the composition reaches an unsafe state.
To avoid such trivial solutions, we require that if a plant satisfies a prophecy from a state, it must be compatible with the controller from that state.
This can be formalized as follows: a prophecy controller $\withprophecies = (S,s_0,\tau,\kappa)$ is said to be \emph{forward-complete} if for every plant $\strat_{\plant}$, if $\strat_{\plant}\in\kappa(s,\alpha)$ for some $s\in S$ and $\alpha\in 2^{\outputs{\contr}}$, then $\withprophecies(s)\parallel \strat_{\plant} \neq \emptyset$.
With this definition, we can now define a universal controller.
\begin{definition}[Universal Controller]
    A \emph{universal controller} $\univcontr$ for a specification $\varphi$ is a forward-complete prophecy controller that is correct and most permissive for every plant.
\end{definition}

In \cite{tacaspaper}, the authors present a procedure to synthesize a universal safety controller (USC), i.e., a universal controller for safety LTL specifications.
The procedure is based on the safety automaton of the specification and shows that the correct safe prophecies can be formulated as tree automata~\cite{automata-book}.
Furthermore, it guarantees that the USC uses the same transition structure as its safety automaton~{\cite{tacaspaper}}.
Hence, from now on, we only consider and refer to USCs (or any prophecy controller) as its automaton with prophecy annotations, i.e., $\withprophecies = (\aut, \kappa)$.

\section{Approximations of Universal Controllers}\label{sec:approximations}
The construction of a USC in~\cite{tacaspaper} computes the exact set of correct prophecies, ensuring correctness and most permissiveness for \emph{all} plants.
In practice, however, we typically work with a restricted class of plants, and building the (often complex) USC is not practical.
In this paper, we consider approximations of the USC that maintain both under- and over-approximations of the prophecies.
These approximations can be used to generate positive and negative examples for learning the prophecies.

\subsection{Approximations}
Our goal is to construct prophecy controllers that provide both sound under- and over-approximations of the USC's prophecies. 
We formalize this notion below.
\begin{definition}[Approximations]\label{def:approximation}
Given a USC $\univcontr = (\aut,\kappa)$ for a safety LTL specification, an \emph{approximation} is a tuple $\approximation = (\aut,\underkappa,\overkappa)$, where $\underkappa$ and $\overkappa$ are prophecy annotations such that: $\underkappa(q,\alpha) \subseteq \kappa(q,\alpha)\subseteq \overkappa(q,\alpha)$ for all states $q$ and propositions $\alpha\subseteq{\outputs{\contr}}$.
\end{definition}
Here, $\underkappa$ is an under-approximation and $\overkappa$ is an over-approximation of the prophecy annotations $\kappa$ of the USC $\univcontr$.
We refer to under-approximation and over-approximation as $\underapprox = (\aut,\underkappa)$ and $\overapprox = (\aut,\overkappa)$, respectively.
Note that the above definition does not require the approximations to be correct or most permissive for all plants.
Nevertheless, it is straightforward to see that the under-approximation remains correct for every plant, while the over-approximation is most permissive for every plant.

\begin{lemma}\label{rem:approximation}
    For an approximation $\approximation = (\aut,\underkappa,\overkappa)$, its under-approximation $\underapprox = (\aut,\underkappa)$ is correct and its over-approximation $\overapprox = (\aut,\overkappa)$ is most permissive for all plants.
\end{lemma}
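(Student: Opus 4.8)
The plan is to leverage two facts: the consistency relation is monotone in the prophecy annotation, and the three prophecy controllers $\underapprox$, $\univcontr$, and $\overapprox$ share the same underlying automaton $\aut$. The second fact ensures that the set $\reachable(\cdot \times (\strat_\contr \parallel \strat_\plant))$ appearing in the consistency definition is identical for all three controllers, since this reachable set is determined solely by the transition structure of $\aut$ together with the Moore machine $\strat_\contr \parallel \strat_\plant$, and is completely independent of the prophecy annotation. Thus consistency with $\underapprox$, $\univcontr$, or $\overapprox$ is compared over exactly the same set of reachable triples $(s, s^\plant, s^\contr)$, and the three notions of consistency differ only in the membership test imposed at each such triple.

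First I would establish correctness of $\underapprox$. I would fix any plant $\strat_\plant$ and controller $\strat_\contr$ with $\strat_\contr \vDash \underapprox \parallel \strat_\plant$. By the definition of consistency, $\strat_\plant(s^\plant) \in \underkappa(s, \labelts^\contr(s^\contr))$ holds at every reachable triple. Since $\underkappa(q,\alpha) \subseteq \kappa(q,\alpha)$ by \Cref{def:approximation}, the same membership holds for $\kappa$, i.e. $\strat_\contr \vDash \univcontr \parallel \strat_\plant$. Because $\univcontr$ is correct for every plant, this gives $\strat_\plant \parallel \strat_\contr \vDash \varphi$, which is precisely the correctness of $\underapprox$.

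Next I would establish most permissiveness of $\overapprox$ by the dual argument. Assuming $\strat_\plant \parallel \strat_\contr \vDash \varphi$, most permissiveness of $\univcontr$ yields $\strat_\contr \vDash \univcontr \parallel \strat_\plant$, so $\strat_\plant(s^\plant) \in \kappa(s, \labelts^\contr(s^\contr))$ at each reachable triple. As $\kappa(q,\alpha) \subseteq \overkappa(q,\alpha)$, the membership lifts to $\overkappa$, giving $\strat_\contr \vDash \overapprox \parallel \strat_\plant$, which is exactly most permissiveness of $\overapprox$.

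There is no real difficulty in this argument; the one point worth stating explicitly is that the shared automaton $\aut$ makes the reachable sets coincide across the three annotations, so that the only difference between consistency with $\underapprox$, $\univcontr$, and $\overapprox$ lies in the membership test at each reachable triple. Given this observation, both claims follow immediately from the set inclusions $\underkappa \subseteq \kappa \subseteq \overkappa$ and the correctness and most permissiveness of the exact USC $\univcontr$.
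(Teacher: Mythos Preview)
Your argument is correct and is exactly the natural one: monotonicity of the consistency relation in the prophecy annotation, together with the shared automaton $\aut$ ensuring identical reachable sets, reduces both claims to the inclusions $\underkappa\subseteq\kappa\subseteq\overkappa$ and the correctness/permissiveness of $\univcontr$. The paper does not spell out a proof of this lemma (it only remarks that the claim is ``straightforward to see''), so your write-up is precisely the intended justification.
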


\subsection{Refinements}
While \cref{rem:approximation} guarantees that any under-approximation is correct and any over-approximation is most permissive, such approximations are not immediately useful in practice.
For example, a trivial approximation sets $\underkappa$ to the empty set and $\overkappa$ to the set of all plants, which yields no useful insight about the actual plants.
In practical scenarios, approximations become valuable when the synthesis procedure targets a specific class of plants:
For every new plant for which we synthesize a solution, we can refine the approximation $\approximation$.
This incrementally increases the precision, targeting an approximation that is correct and permissive for this exact class of plants.
In this scenario, we assume that the general structure of the synthesis problem remains consistent, while specific aspects, such as the arrangement of obstacles, vary between plants.
For example, a refinement of the approximate controller in our running example (see \Cref{fig:approx:controller}) could be that some processor becomes unavailable under some circumstances.
Once this plant is observed, the approximations are refined to check exactly for this behavior. 
While the controller strategies remain largely consistent across different plants, the associated prophecies may vary.
To address this, we apply \emph{refinements} to update an existing approximation so that it remains correct and most permissive for the current plant.
To guarantee that each refinement increases the precision of the approximation, we formalize the following necessary conditions.

\begin{definition}[Refinement]
    Given an approximation $\approximation = (\aut,\underkappa,\overkappa)$ and a plant $\strat_\plant$, a \emph{refinement} is a different approximation $\approximation' = (\aut,\underkappa',\overkappa')$ such that: 
    \begin{enumerate}[label=(\roman*)]
        \item $\underkappa(q,\alpha) \subseteq \underkappa'(q,\alpha)$ and $\overkappa'(q,\alpha) \subseteq \overkappa(q,\alpha)$ for all states $q$ and propositions $\alpha\subseteq{\outputs{\contr}}$,\label{item:refinement-monotonicity}
        \item $\underapprox'$ and $\overapprox'$ are correct and most permissive for $\strat_\plant$.\label{item:refinement-correctness}
    \end{enumerate}
\end{definition}
The definition of refinement implies that the approximation $\approximation$ is correct or permissive for at least one more plant.
We claim that, in practice, every refinement steps captures a multitude of new plants.

\subsection{Computing Refinements}
We continue by providing algorithms for the computation of refinements.
The goal of the refinement is to incorporate the information of the plant $\strat_\plant$ into the prophecy approximations.
We do so by extracting all the sub-plants of the given plant and updating each prophecy annotation by adding or removing them based on their correctness.

\begin{algorithm}[t]
    \caption{Refinement of an Approximation}
    \label{alg:refinement}
    \begin{mycode}  
Input: An approximation $\approximation = (\aut,\underkappa,\overkappa)$ with $\aut = (Q,q_0,\Sigma,\delta,\acc)$ and a plant $\strat_{\plant} = (S^\plant,s_0^\plant,\tau^\plant,\labelts^\plant)$.
Output: A refined approximation $\approximation' = (\aut,\underkappa',\overkappa')$
let $\refine (\approximation, \strat_{\plant}) :=$
    let $G  = (Q\times S^\plant, (q_0,s_0), \delta',\acc') \gets \aut\times\strat_{\plant}$ // Compose
    let $\mathit{Win} \gets \solve(G)$ // Winning states in the game
    for each $q\in Q$, $s^\plant\in S^\plant$, and $\alpha\subseteq {\outputs{\contr}}$ do // All outputs
            if $\delta'((q,s^\plant),\alpha\cup\beta)\in \mathit{Win}$ for every $\beta\in 2^{\outputs{\env}}$
                then $\underkappa(q,\alpha) \gets \underkappa(q,\alpha) \cup \{\strat_{\plant}(s^\plant)\}$ // Add plant
                else $\overkappa(q,\alpha) \gets \overkappa(q,\alpha) \setminus \{\strat_{\plant}(s^\plant)\}$// Remove plant
    return $(\aut,\underkappa,\overkappa)$
\end{mycode}
\end{algorithm}

The overall algorithm for refining an approximation is presented in \cref{alg:refinement}.
The algorithm takes an approximation $\approximation = (\aut,\underkappa,\overkappa)$ and a plant $\strat_\plant$ as input and returns a refined approximation $\approximation' = (\aut,\underkappa',\overkappa')$.
The first step is the construction of the game $G$ that ranges over the composition of the specification automaton $\aut$ and the plant $\strat_\plant$.
We solve this game to obtain the winning region and the states of the winning region $\mathit{Win}$. 
Next, based on the winning states, it iterates over all states $q\in Q$ and propositions $\alpha\subseteq {\outputs{\contr}}$ to update the prophecy annotations $\underkappa'$ and $\overkappa'$.
If a sub-plant $\strat_{\plant}(s^\plant)$, i.e., the strategy of the plant at state $s^\plant$, is a correct choice for the output $\alpha$ from state $q$, then it is added to the under-approximation $\underkappa'(q,\alpha)$.
Otherwise, if it is not a correct choice, it is removed from the over-approximation $\overkappa'(q,\alpha)$.
This ensures that the refined approximation $\approximation'$ is more precise than the original approximation $\approximation$, and it is correct and most permissive for the plant $\strat_\plant$.

\begin{theorem}\label{thm:refinement}
Given an approximation $\approximation$ and a plant $\strat_\plant$, the procedure $\refine(\approximation, \strat_\plant)$ returns a refinement.
\end{theorem}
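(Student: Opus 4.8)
The plan is to check the two clauses of the refinement definition in turn, doing the real work only for clause~\ref{item:refinement-correctness}. Clause~\ref{item:refinement-monotonicity} is immediate from \Cref{alg:refinement}: every update is either $\underkappa(q,\alpha)\gets\underkappa(q,\alpha)\cup\{\strat_\plant(s^\plant)\}$ or $\overkappa(q,\alpha)\gets\overkappa(q,\alpha)\setminus\{\strat_\plant(s^\plant)\}$, so $\underkappa(q,\alpha)\subseteq\underkappa'(q,\alpha)$ and $\overkappa'(q,\alpha)\subseteq\overkappa(q,\alpha)$ hold for all $q,\alpha$ by construction; the inclusions are strict, so $\approximation'\neq\approximation$, precisely when the loop relabels at least one sub-plant.

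The core is to relate the loop's guard to the USC prophecy. Write $T(q,s^\plant,\alpha)$ for the condition ``$\delta'((q,s^\plant),\alpha\cup\beta)\in\mathit{Win}$ for every $\beta\in 2^{\outputs{\env}}$''. Since $G=\aut\times\strat_\plant$ is a safety game won by keeping the $\aut$-component in $F$ and $\mathit{Win}=\solve(G)$ is the greatest fixpoint of the controllable-predecessor operator, every state of $\mathit{Win}$ has its $\aut$-component in $F$, and $T(q,s^\plant,\alpha)$ says exactly that playing $\alpha$ from $(q,s^\plant)$ keeps every successor winning. By the USC construction of \cite{tacaspaper}, $\strat\in\kappa(q,\alpha)$ holds precisely for those plants $\strat$ for which $\alpha$ does not immediately violate safety from $q$ and leaves a state admitting a correct controller; hence for the sub-plants of $\strat_\plant$ we have $T(q,s^\plant,\alpha)\Leftrightarrow\strat_\plant(s^\plant)\in\kappa(q,\alpha)$. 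As $\approximation$ is already an approximation on the plants the loop does not touch, and the loop adds $\strat_\plant(s^\plant)$ to $\underkappa$ exactly when $T$ holds and removes it from $\overkappa$ exactly when $T$ fails, we get $\underkappa'(q,\alpha)\subseteq\kappa(q,\alpha)\subseteq\overkappa'(q,\alpha)$; thus $\approximation'$ is again an approximation, and moreover $\underkappa'$ and $\overkappa'$ \emph{coincide with} $\kappa$ on every sub-plant $\strat_\plant(s^\plant)$.

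For clause~\ref{item:refinement-correctness} I would use that consistency w.r.t.\ $\strat_\plant$ queries a prophecy only at sub-plants: $\strat_\contr\vDash\withprophecies\parallel\strat_\plant$ constrains $\strat_\plant(s^\plant)\in\kappa(q,\labelts^\contr(s^\contr))$ over the triples in $\reachable(\withprophecies\times(\strat_\contr\parallel\strat_\plant))$, and since $\underapprox'$, $\overapprox'$, and $\univcontr$ share the transition structure $\aut$, these triples are identical. As $\underkappa'$, $\kappa$, and $\overkappa'$ agree on all such sub-plants, the relations $\strat_\contr\vDash\underapprox'\parallel\strat_\plant$, $\strat_\contr\vDash\univcontr\parallel\strat_\plant$, and $\strat_\contr\vDash\overapprox'\parallel\strat_\plant$ coincide; since $\univcontr$ is correct and most permissive for $\strat_\plant$, so are $\underapprox'$ and $\overapprox'$, giving clause~\ref{item:refinement-correctness}. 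One can also argue this directly, bypassing the characterization of $\kappa$: assuming $\strat_\contr\vDash\overapprox'\parallel\strat_\plant$, consistency and the removal rule force $T$ at every reachable step, so each successor $(q_{i+1},s^\plant_{i+1})\in\mathit{Win}$ has $q_{i+1}\in F$ and the run stays safe, i.e.\ $\strat_\plant\parallel\strat_\contr\vDash\varphi$; symmetrically, $\strat_\plant\parallel\strat_\contr\vDash\varphi$ lets $\strat_\contr$ witness $T$ at every step, so the addition rule yields $\strat_\contr\vDash\underapprox'\parallel\strat_\plant$.

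I expect the correspondence $T(q,s^\plant,\alpha)\Leftrightarrow\strat_\plant(s^\plant)\in\kappa(q,\alpha)$ to be the main obstacle. The guard inspects the \emph{successors} of $(q,s^\plant)$ rather than the state itself, so relating it to safety of an entire run needs the greatest-fixpoint characterization of $\mathit{Win}$ plus a separate check at the initial state: if $q_0\notin F$ the specification is already violated and, $q_0$ being a sink, every successor leaves $\mathit{Win}$, so $T$ fails and consistency is unattainable, making correctness vacuous. A further subtlety is that distinct plant states $s^\plant\neq{s'}^\plant$ may induce the same sub-plant $\strat_\plant(s^\plant)=\strat_\plant({s'}^\plant)$; I would observe that behaviorally equal sub-plants yield isomorphic sub-games and hence the same $\mathit{Win}$-status, so the loop never both adds and removes one plant from a single slot and the set updates are well defined.
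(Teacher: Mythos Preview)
Your proposal is correct and follows essentially the same approach as the paper: both proofs hinge on the correspondence between the loop guard $T(q,s^\plant,\alpha)$ and membership $\strat_\plant(s^\plant)\in\kappa(q,\alpha)$, derive monotonicity~\ref{item:refinement-monotonicity} directly from the add/remove form of the updates, and obtain~\ref{item:refinement-correctness} from the winning-region characterization. Your write-up is in fact more careful than the paper's on two points the paper leaves implicit: the edge case of unsafe (sink) states where the successor-based guard must still force $T$ to fail, and the well-definedness issue when distinct plant states induce the same sub-plant; your transfer argument via coincidence of the three consistency relations with $\univcontr$ is a slightly cleaner packaging of clause~\ref{item:refinement-correctness} than the paper's direct but terse argument, and your alternative direct argument matches the paper's route.
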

\begin{proof}
    First, let us recall the properties of the universal safety controllers (USC) and their synthesis procedure from~\cite{tacaspaper}.
    Let $\univcontr = (\aut = (Q,q_0,\delta,\acc),\kappa)$ be the USC returned by their procedure.
    By the definition of safe prophecies, we know that $\kappa(q,\alpha) = \{\strat_\plant \mid \exists\strat_\contr\in\contrall. \stratoutput(\strat_\contr,\epsilon) = \alpha \wedge \runs(\aut,q,\strat_\plant\parallel \strat_\contr)\subseteq \acc\}$ for all states $q\in Q$ and propositions $\alpha\subseteq {\outputs{\contr}}$.
    This means that $\kappa(q,\alpha)$ contains all plant for which there exists a controller $\strat_\contr$ that outputs $\alpha$ initially from state $q$ and ensures that the run of the composition of the plant and the controller is safe, i.e., it stays in the safe regions as in $\acc$.

    Now, let $\approximation = (\aut,\underkappa,\overkappa)$ and $\approximation' = (\aut,\underkappa',\overkappa')$ be the prophecy controllers before and after the refinement, respectively.
    By construction, we know that the winning states $Win$ in the game $G$ computed in line 6 of Algorithm 1 are the states for which there exists a controller that ensures the run in the composition game is safe.
    Therefore, $\kappa(q,\alpha)$ contains all plant for which there exists a controller that outputs $\alpha$ initially from state $q$ and ensures that the run of their composition from $q$ moves to a winning state.
    This means that $\kappa(q,\alpha)$ contains all plant for which $\alpha$ ensures going to a winning state from state $q$.
    Hence, lines 9 and 10 of Algorithm 1 ensure that $\underkappa'(q,\alpha)$ includes such plants whereas $\overkappa'(q,\alpha)$ excludes plants that do not ensure going to a winning state.
    Thus, by construction, $\approximation'$ is an approximation of $\univcontr$.

    As line 9 and 10 of Algorithm 1 only add plants to $\underkappa'$ and remove plants from $\overkappa'$, we have that $\underkappa(q,\alpha) \subseteq \underkappa'(q,\alpha)$ and $\overkappa'(q,\alpha) \subseteq \overkappa(q,\alpha)$ as required by property~(i) of refinements.
    Furthermore, as $\underkappa'(q,\alpha)$ contains all sub-plants of $\strat_\plant$ which ensures going to a winning state from state $q$ for the output $\alpha$, it is correct for the plant $\strat_\plant$.
    Similarly, as $\overkappa'(q,\alpha)$ contains all sub-plants of $\strat_\plant$ which does not ensure going to a winning state from state $q$ for the output $\alpha$, it is most permissive for the plant $\strat_\plant$.
    Thus, property~(ii) of refinements holds, and hence, $\approximation'$ is a refinement of $\approximation$.
\end{proof}

One can start with a trivial approximation (e.g., $\underkappa = \emptyset$ and $\overkappa = \plantall$) and iteratively refine it with the information from the plants encountered in practice.
Due to the monotonicity of the refinement, this also ensures that, at each step, both the under- and over-approximation preserve the solutions of previous computation steps.

\section{Synthesis with Learned Prophecies}
\label{sec:learning}
We now turn approximating controllers and the algorithms of the previous section into practice by learning representations of prophecies for under- and over-approximations.
The target formalism for approximations are formulas in computation tree logic (CTL).
\begin{algorithm}[t]
    \caption{Learning Approximations}
    \label{alg:learning}
    \begin{mycode}
Input: An approximation $\approximation = (\aut,\underkappa,\overkappa)$ with $\aut = (Q,q_0,\Sigma,\delta,\acc)$ and a set of plants $\mathit{Set} \subseteq \plantall$.
Output: A prophecy controller $\withprophecies$.
let $\learnApprox(\approximation, \mathit{Set}) :=$
    for each $\strat_\plant \in \mathit{Set}$ do // For each plant in the set
        $\approximation \gets \refine(\approximation, \strat_\plant)$ // Refine the approximation
    for each $q\in Q$, $\alpha\subseteq {\outputs{\contr}}$ do
        let $Pos \gets \underkappa(q,\alpha)$ // Positive samples
        let $Neg \gets \plantall \setminus \overkappa(q,\alpha)$ // Negative samples
        let $\phi \gets \learnCTL(Pos, Neg)$ // Learn a CTL formula
        $\kappa(q,\alpha) \gets \lang(\phi)$ // Set the prophecy annotation
    return $(\aut,\kappa)$ 
\end{mycode}
\end{algorithm} 

\subsection{Learning CTL Formulas for Approximations}
Prophecies of universal controllers, and their over- and under-approximation, are sets of plants.
The algorithm presented in \cite{tacaspaper}, therefore, uses tree automata for prophecies that accept plants iff the considered output is correct for this plant.
While tree automata represent prophecies precisely, and thereby preserve correctness and most permissiveness, they are computationally hard to compute, solve, and especially hard to understand.
We choose a more lightweight formalism for categorizing trees: CTL formulas.
While CTL formulas are not as expressive as tree automata~\cite{baier2008principles}, they are concise, human-readable, and easy to verify.

Our algorithmic solution to constructing CTL formulas for prophecies is learning.
The key idea is to use the set of plants in the under-approximation as positive samples and the complement of the set of plants in the over-approximation as negative samples.
From these samples, we use standard CTL learning techniques~\cite{learnCTL} to learn a CTL formula that accepts the positive samples and rejects the negative samples.
This learned formula can then be used as the prophecy annotation -- the over-approximation is the set of all plants that satisfy the formula, and the under-approximation is the set of all plants that do not satisfy the formula.
The learning process is summarized in \cref{alg:learning}.

As the algorithm returns a prophecy annotation that classifies the plants in the under-approximation as positive samples and the plants not in the over-approximation as negative samples, by \cref{thm:refinement} and property~\ref{item:refinement-correctness} of refinements, the resulting prophecy controller is guaranteed to be correct and most permissive for every plant in the set. 
We formalize this result in the following.
\begin{corollary}\label{thm:learning}
    Given an approximation $\approximation = (\aut,\underkappa,\overkappa)$ and a plant set $Set \subseteq \plantall$, the procedure $\learnApprox(\approximation, Set)$ returns a prophecy controller $\withprophecies = (\aut,\kappa)$ such that: $\underkappa(q,\alpha) \subseteq \kappa(q,\alpha) \subseteq \overkappa(q,\alpha)$ for all states $q$ and propositions $\alpha\subseteq{\outputs{\contr}}$.
    Furthermore, $\withprophecies$ is correct and most permissive for every plant in $Set$. 
\end{corollary}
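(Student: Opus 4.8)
The plan is to split the argument into the set-inclusion bound and the correctness/most-permissiveness claim, leaning on \cref{thm:refinement} for the refinement loop and on the specification of $\learnCTL$ for the classification step. Throughout, I write $\approximation^* = (\aut,\underkappa^*,\overkappa^*)$ for the approximation obtained after the loop on lines~4--5, and let $\kappa_{\univcontr}$ denote the exact prophecy annotation of the USC $\univcontr$ from the proof of \cref{thm:refinement}.

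First I would establish $\underkappa(q,\alpha) \subseteq \kappa(q,\alpha) \subseteq \overkappa(q,\alpha)$. Iterating \cref{thm:refinement} together with property~\ref{item:refinement-monotonicity} along the loop gives $\underkappa(q,\alpha) \subseteq \underkappa^*(q,\alpha)$ and $\overkappa^*(q,\alpha) \subseteq \overkappa(q,\alpha)$ for the input annotations. On lines~8--9 we set $\mathit{Pos} = \underkappa^*(q,\alpha)$ and $\mathit{Neg} = \plantall \setminus \overkappa^*(q,\alpha)$, and the defining guarantee of $\learnCTL$ is that its output $\phi$ satisfies every positive and rejects every negative sample, i.e. $\mathit{Pos} \subseteq \lang(\phi)$ and $\mathit{Neg} \cap \lang(\phi) = \emptyset$. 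Taking $\kappa(q,\alpha) = \lang(\phi)$ yields $\underkappa^*(q,\alpha) \subseteq \kappa(q,\alpha) \subseteq \overkappa^*(q,\alpha)$, and chaining with the monotonicity inclusions recovers the bound against the input annotations as well.

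Next I would prove correctness and most permissiveness for each $\strat_\plant \in \mathit{Set}$. The central observation is that the learned $\kappa$ agrees with $\kappa_{\univcontr}$ on every sub-plant of every plant in $\mathit{Set}$. By the proof of \cref{thm:refinement}, the refinement adds a sub-plant $P = \strat_\plant(s^\plant)$ to $\underkappa$ exactly when $P \in \kappa_{\univcontr}(q,\alpha)$ and removes it from $\overkappa$ exactly when $P \notin \kappa_{\univcontr}(q,\alpha)$; this decision depends only on $P$, $q$, $\alpha$, and $\varphi$, so it is consistent no matter which plant's refinement encounters $P$. Hence after the loop, for each sub-plant $P$ of each $\strat_\plant \in \mathit{Set}$: if $P \in \kappa_{\univcontr}(q,\alpha)$ then $P \in \underkappa^*(q,\alpha) \subseteq \kappa(q,\alpha)$, and if $P \notin \kappa_{\univcontr}(q,\alpha)$ then $P \notin \overkappa^*(q,\alpha) \supseteq \kappa(q,\alpha)$. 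Thus $\withprophecies$ and $\univcontr$ classify all sub-plants of plants in $\mathit{Set}$ identically. Since consistency of a controller with a prophecy controller w.r.t.\ $\strat_\plant$ only inspects the values $\kappa(s,\alpha)$ at reachable sub-plants $\strat_\plant(s^\plant)$, the two controllers are consistent with exactly the same set of controllers on each $\strat_\plant$; as $\univcontr$ is correct and most permissive for $\strat_\plant$, both properties transfer to $\withprophecies$.

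I expect the main obstacle to be the persistence argument of the second part: showing that sequential refinements with distinct plants never disturb the already-correct classification of an earlier plant's sub-plants. The clean resolution is to identify the refinement's add/remove test with membership in the single global object $\kappa_{\univcontr}$, after which property~\ref{item:refinement-monotonicity} guarantees that decisions are only tightened and never reversed. A secondary care point is to make precise that correctness and most permissiveness of a prophecy controller for a fixed plant are determined solely by the prophecy annotations at that plant's reachable sub-plants, so that pointwise agreement with $\univcontr$ suffices to transfer both properties at once.
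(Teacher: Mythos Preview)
Your proof is correct and matches the paper's intended argument: the paper states this as a corollary with only a one-sentence justification invoking \cref{thm:refinement} and property~\ref{item:refinement-correctness}, and your write-up is a faithful and careful unpacking of exactly that reasoning. In particular, your explicit treatment of the persistence issue---that later refinements never undo the correct classification of an earlier plant's sub-plants, because the add/remove test coincides with membership in the global $\kappa_{\univcontr}$---fills in a detail the paper leaves implicit but clearly relies on.
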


\begin{algorithm}[t]
    \caption{On-the-fly Composition}
    \label{alg:compose}
    \begin{mycode}
Input: A prophecy controller $\withprophecies = (\aut,\kappa)$ with $\aut = (Q,q_0,\Sigma,\delta,\acc)$ and a plant $\strat_\plant = (S,s_0,\tau,\labelts)$.
Output: An explicit controller $\strat_\contr$.
let $\compose(\withprophecies, \strat_\plant) :=$
    let $s_0' \gets (q_0,s_0)$; $S' \gets \{s_0'\}$; $\tau' \gets []$; $\labelts' \gets []$ // Initialize
    let $queue \gets Queue(s_0')$ // Initialize the queue
    while $(q,s) \gets queue.pop()$ do
        for each $\alpha\subseteq {\outputs{\contr}}$ do // For each output
            if $\strat_\plant(s) \in \kappa(q,\alpha)$ then // If prophecy is satisfied
                let $\labelts'(q,s) \gets \alpha$ // Set the label
                for each $\beta\subseteq\inputs{\contr}$ do // For each input, add transition
                    $q'' \gets \delta(q,\alpha\cup\beta)$; $s'' \gets \tau(s,(\alpha\cup\beta)\cap\inputs{\plant})$ 
                    $\tau'((q,s),\beta) \gets (q'',s'')$ 
                    if $(q'',s'') \notin S'$ then // Add new states
                        $S' \gets S' \cup \{(q'',s'')\}$; $queue.push((q'',s''))$
                break // break the loop as we found a valid output
    return $(S', s_0', \tau', \labelts')$ 
\end{mycode}
\end{algorithm}

\subsection{Composition with Prophecy Controllers}
\label{sec:composition}
The goal of \cref{alg:learning} is to obtain a prophecy controller that correctly approximates the USC restricted to the class of plants similar to the ones in the set $\mathit{Set}$, i.e., the prophecy controller that is correct and most permissive for this class of plants.
Once the prophecy controller is learned, we can compose the learned prophecy controller with a concrete plant to obtain an explicit controller for the given plant, as long as the plant is contained in the learned prophecies.
The composition algorithm is presented in \cref{alg:compose}.

The algorithm implements an on-the-fly exploration of the state space of the composition of prophecy controller and plant, and only adds transitions that satisfy the prophecy annotations.
For every visited state and every possible output, it checks whether the prophecy annotation is satisfied by the plant.
If satisfied, the output is added as a label to the current state, and transitions to the next states are added for each possible input of the controller.
Once all states have been explored, the algorithm returns the explicit controller.

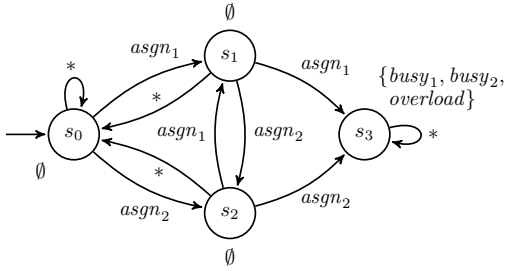
\begin{figure}[]
        \centering
        \resizebox{0.9\linewidth}{!}{
        \tikzstyle{state}=[draw, circle, fill=none, minimum width=0.7cm, 
minimum height = 0.7cm,
align=center, thick]

\begin{tikzpicture}[->,>=stealth',shorten >= 1pt,auto]

\newcommand\x{2}
\newcommand\y{.7}
\newcommand\bend{15}

\node[state] (p0){%
    $~s_0$
};
\node[state] (p1) [above right = \y and \x of p0]{%
    $~s_1$
};

\node[state] (p2) [below right = \y and \x of p0]{%
    $~s_2$
};

\node[state] (p3)[right = 2*\x of p0]{%
    $~s_3$
};

\node [below left=0.1em and 0.1em of p0] (l0) {$\emptyset$};
\node [above =0.1em of p1] (l1) {$\emptyset$};
\node [below =0.1em of p2] (l2) {$\emptyset$};
\node [above right=-0.1em and -0.1em of p3] (l3) {$\overload\}$};
\node [above right=1.0em and -0.6em of p3] (l4) {$\{\busy_1,\busy_2,$};

\node[state, draw=none] (init)[left = 0.7 of p0]{%
};
\node[state, draw=none] (align)[right = 0.7 of p3]{%
};

\path (init) edge[thick] (p0)
(p0) edge[loop above,thick] node[above, align=center] {%
  $*$
  } (p0)
(p0) edge[bend left=\bend,thick] node[above,xshift = 5pt, yshift= 7pt] {%
  $\assign_1$
  } (p1)
(p0) edge[bend right=\bend,thick] node[below, yshift = -5pt] {%
  $\assign_2$
  } (p2)
(p1) edge[bend left=\bend,thick] node[above, align=center] {%
  $*$
  } (p0)
(p1) edge[bend left=\bend,thick] node[above, xshift = 10pt] {%
  $\assign_1$
  } (p3)
(p1) edge[bend left=\bend,thick] node[right] {%
  $\assign_2$
  } (p2)
(p2) edge[bend right=\bend,thick] node[below] {%
  $*$
  } (p0)
(p2) edge[bend right=\bend,thick] node[below, xshift = 10pt] {%
  $\assign_2$
  } (p3)
(p2) edge[bend left=\bend,thick] node[left] {%
  $\assign_1$
  } (p1)
(p3) edge[loop right,thick] node[right] {%
  $*$
  } ();
\end{tikzpicture}
        }
    \caption{A plant for the load balancer example that only signals $\busy$ when processors are overloaded. We use $*$ to represent edges that are taken whenever no other edge guard is satisfied.}
    \label{fig:plant:new}
\end{figure}

Note that checking whether the prophecy is satisfied by a plant is done by model checking the corresponding CTL formula against the plant, which can be done in polynomial time~\cite{DBLP:journals/toplas/ClarkeES86}.
This allows us to efficiently check the prophecies and compose the controller with the plant on-the-fly.

\begin{algorithm}[b]
    \caption{Synthesis via Learning and Refinement}
    \label{alg:synthesis}
    \begin{mycode}
Input: An approximation $\approximation$, its learned prophecy controller $\withprophecies = (\aut,\kappa)$ with $\aut = (Q,q_0,\delta,\acc)$, and a plant $\strat_\plant$.
Output: A correct controller $\strat_\contr$ for $\strat_\plant$.
let $\synthesize(\approximation, \withprophecies, \strat_\plant) :=$
    $\strat_\contr \gets \compose(\withprophecies, \strat_\plant)$ // Compose controller
    if $\runs(\aut,q_0,\strat_\plant \parallel \strat_\contr)\subseteq \acc$ then // Check correctness
        return $\strat_\contr$ // Return correct controller
    else
        $\withprophecies \gets \learnApprox(\approximation, \{\strat_\plant\})$ // Refine and re-learn
        return $\compose(\withprophecies, \strat_\plant)$ // Return new controller
\end{mycode}
\end{algorithm}

\subsection{Synthesis via Learning and Refinement}
Although \cref{alg:compose} can be used to synthesize an explicit controller for a given plant using a learned prophecy controller, its correctness is not immediately guaranteed: If the plant is not contained in the learned prophecies, the constructed controller might not be correct.
To overcome this limitation, we add an additional model-checking step to verify whether the synthesized controller is correct for the given plant.
If it is, we return the synthesized controller, otherwise, we refine the approximation using this plant and repeat the learning process to obtain a new prophecy controller that includes this plant.
This overall synthesis procedure is summarized in \cref{alg:synthesis} and its correctness is shown in \cref{thm:refinement} and \cref{thm:learning}.
\begin{corollary}
    Given an approximation $\approximation$ and its learned prophecy controller $\withprophecies$ for a safety LTL specification $\varphi$, and a plant $\strat_\plant$, the procedure $\synthesize(\approximation, \withprophecies, \strat_\plant)$ returns a controller $\strat_\contr$ such that $\strat_\plant \parallel \strat_\contr \vDash \varphi$.
\end{corollary}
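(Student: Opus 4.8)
The plan is to treat the two \texttt{return} statements of $\synthesize$ separately, since each yields a different controller whose correctness must be argued. The first branch is reached only when the explicit guard $\runs(\aut,q_0,\strat_\plant\parallel\strat_\contr)\subseteq\acc$ holds, and there the claim is immediate. Since $\aut=\ltlToAut(\varphi)$ is deterministic with $\lang(\aut)=\lang(\varphi)$, and $\acc=\safety(F)$ marks a run accepting exactly when it never leaves $F$, the set $\runs(\aut,q_0,\strat_\plant\parallel\strat_\contr)$ consists precisely of the $\aut$-runs induced by the traces of $\strat_\plant\parallel\strat_\contr$ over all environment inputs. The guard therefore states that every such trace is accepted by $\aut$, i.e.\ satisfies $\varphi$, which is by definition $\strat_\plant\parallel\strat_\contr\vDash\varphi$.

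For the second branch I would first reuse the established machinery. The call $\learnApprox(\approximation,\{\strat_\plant\})$ refines $\approximation$ with $\strat_\plant$ through $\refine$ (correct by \Cref{thm:refinement}) and then learns CTL prophecies, so by \Cref{thm:learning} the resulting prophecy controller $\withprophecies'=(\aut,\kappa)$ is both correct and most permissive for $\strat_\plant$. Hence, by the definition of correctness, it suffices to show that $\strat_\contr'=\compose(\withprophecies',\strat_\plant)$ is \emph{consistent} with $\withprophecies'$ w.r.t.\ $\strat_\plant$; consistency then yields $\strat_\plant\parallel\strat_\contr'\vDash\varphi$. Consistency is essentially by construction: the states of $\strat_\contr'$ are product states $(q,s)\in\reachable(\aut\times\strat_\plant)$, its transitions advance $\aut$ and $\strat_\plant$ in lockstep, and $\compose$ assigns the label $\labelts'(q,s)=\alpha$ only when the guard $\strat_\plant(s)\in\kappa(q,\alpha)$ holds. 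A short induction on the reachable triples of $\withprophecies'\times(\strat_\contr'\parallel\strat_\plant)$ shows each has the synchronized form $(q,s,(q,s))$, so the requirement $\strat_\plant(s)\in\kappa(q,\labelts'(q,s))$ holds by the very choice of $\alpha$.

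The step I expect to be the main obstacle is establishing that $\compose$ returns a \emph{total} controller, i.e.\ that at every explored product state some prophecy-satisfying output exists, so that no reachable state is left unlabeled (otherwise $\strat_\plant\parallel\strat_\contr'$, and the statement itself, would be ill-defined). I would resolve this using the structure of $\refine$: for every sub-plant $\strat_\plant(s)$ and pair $(q,\alpha)$, $\refine$ adds $\strat_\plant(s)$ to $\underkappa'(q,\alpha)$ exactly when $\alpha$ keeps $(q,s)$ inside the winning region $\mathit{Win}=\solve(G)$ of the game $G=\aut\times\strat_\plant$ for all environment moves, and otherwise removes it from $\overkappa'(q,\alpha)$. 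Thus $\underkappa'$ and $\overkappa'$ \emph{coincide} on all sub-plants of $\strat_\plant$ (consistency of this with previous refinements is guaranteed by \Cref{thm:refinement}), and since the learned $\kappa$ is sandwiched between them, $\strat_\plant(s)\in\kappa(q,\alpha)$ holds iff $\alpha$ keeps $(q,s)$ in $\mathit{Win}$. Consequently $\compose$'s prophecy checks reduce exactly to ``stay in $\mathit{Win}$''; assuming $\strat_\plant$ is admissible we have $(q_0,s_0)\in\mathit{Win}$, and from any state of $\mathit{Win}$ a safe output always remains available, so $\compose$ never gets stuck and every move keeps the composition inside the winning (safe) region. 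Totality then makes $\strat_\contr'$ a well-defined Moore machine for which the consistency argument above applies, and correctness of $\withprophecies'$ closes the case. (If $\strat_\plant$ is inadmissible, $(q_0,s_0)\notin\mathit{Win}$ and no correct controller exists, so the statement must be read under the standing admissibility assumption, matching the incompatibility semantics of prophecy controllers.)
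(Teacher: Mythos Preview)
Your proof is correct and follows the paper's approach, which consists solely of the one-line remark that correctness ``is shown in \cref{thm:refinement} and \cref{thm:learning}'' without any further argument. Your explicit case split on the two \texttt{return} branches (the first handled directly by the guard, the second via correctness of the relearned prophecy controller together with consistency of the output of $\compose$), along with your totality discussion and the admissibility caveat, supplies exactly the details the paper leaves implicit.
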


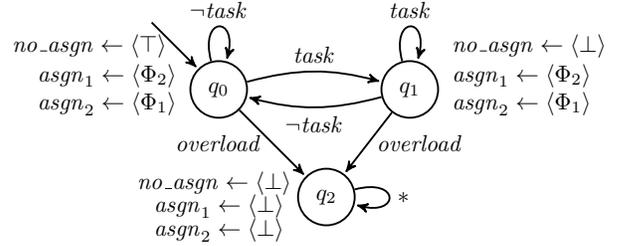
\begin{figure}[]
    \centering
    \resizebox{1\linewidth}{!}{
        \tikzstyle{state}=[draw, circle, fill=none, minimum width=0.7cm, 
minimum height = 0.7cm,
align=center, thick]

\begin{tikzpicture}[->,>=stealth',shorten >= 1pt,auto]

\newcommand\x{2}
\newcommand\y{1}
\newcommand\bend{15}

\node[state] (p0){%
    $~q_0$
};
\node[state] (p1) [right = \x of p0] {%
    $~q_1$
};

\node[state] (p2) [below right = \y and 0.5*\x of p0]{%
    $~q_2$
};

\node[state, draw=none] (init)[above left = 0.7 and 0.7 of p0]{%
};

\path (init) edge[thick] (p0)
(p0) edge[thick,bend left=\bend] node[above] {%
  $\task$
  } (p1)
(p0) edge[loop above,thick] node[above] {%
  $\neg \task$
  } (p0)
(p0) edge[thick] node[left, xshift = -0.1em] {%
  $\overload$
  } (p2)
(p1) edge[loop above,thick] node[above] {%
  $\task$
  } (p1)
(p1) edge[thick, bend left = \bend] node[below] {%
  $\neg\task$
  } (p0)
(p1) edge[thick] node[right] {%
  $\overload$
  } (p2)

(p2) edge[loop right,thick] node {%
  $*$
  } ()
;

\node [above left= 0.05 and 0.35 of p0] (l1) {${\mathit{no\_asgn}}\leftarrow \bracket{\top}$};
\node [above left= -0.4 and 0.2 of p0] (l0) {${\assign_1}\leftarrow \bracket{\Phi_2}$};
\node [below left = -0.4 and 0.2 of p0] (l00) {${\assign_2}\leftarrow \bracket{\Phi_1}$};
\node [above right= 0.05 and 0.2 of p1] (l11) {${\mathit{no\_asgn}}\leftarrow \bracket{\bot}$};
\node [above right= -0.4 and 0.2 of p1] (l0) {${\assign_1}\leftarrow \bracket{\Phi_2}$};
\node [below right = -0.4 and 0.2 of p1] (l00) {${\assign_2}\leftarrow \bracket{\Phi_1}$};

\node [left = 0em of p2, xshift=-3em,yshift=2em] (p22) {};

\node [below=0.2em of p22, xshift=-0.2em] (l22) {${\mathit{no\_asgn}}\leftarrow \bracket{\bot}$};
\node [below=1.2em of p22] (l2) {$\assign_1 \leftarrow\bracket{\bot}$};
\node [below=2.2em of p22] (l22) {$\assign_2\leftarrow\bracket{\bot}$};

\end{tikzpicture}
    \vspace*{0.5cm}}
    \caption{A refined prophecy controller for the load balancer example, where the CTL prophecy $\Phi_i$ is given by $\forall\nextt(\overload\Rightarrow\assign_i)$ for $i\in\{1,2\}$.}
    \label{fig:refined:controller}
\end{figure}

\begin{example}\label{ex:full}
    To illustrate the overall synthesis procedure, we revisit the example in \Cref{sec:introduction:example}.
    We consider the specification $\varphi$ in \cref{equ:spec:exp} and th nominal plant $\strat_\plant$ in \cref{fig:plant}.
    By using the procedure $\learnApprox(\approximation, \{\strat_\plant\})$ on this plant with an initial (coarse) approximation, we obtain the prophecy controller $\withprophecies$ in \cref{fig:approx:controller}, where prophecies are represented as CTL formulas.
    As discussed in \cref{sec:introduction:example}, this prophecy controller is not only correct for the plant $\strat_\plant$ but also generalizes to larger plants that have sufficient CPU availability and signal $\busy_i$ whenever $\mathit{cpu}_i$ is busy.

    Suppose we obtain a new plant that does not signal the \emph{busy} status in the same way, e.g., a plant that only signals $\busy$ once the CPUs are overloaded.
    For instance, consider the plant $\strat_\plant'$ as in \cref{fig:plant:new}.
    If we use the procedure $\compose(\withprophecies, \strat_\plant')$ to synthesize a controller for this plant, we obtain an explicit controller that is not correct for this plant, as the prophecies are based on the assumption that the plant signals $\busy$ whenever a CPU is busy.
    Hence, we need to refine the approximation using this plant and re-learn the prophecy controller by calling $\learnApprox(\approximation, \{\strat_\plant'\})$.
    
    \cref{fig:refined:controller} shows the refined prophecy controller after learning the new plant.
    Here, at states $q_0$ and $q_1$, the prophecy annotations for outputs $\assign_1$ and $\assign_2$ are represented by the CTL formulas $\Phi_i \coloneqq \forall\nextt(\overload\Rightarrow\assign_i)$, respectively, stating that if there is an overload in the next step, then the controller must have assigned the task to $\mathit{cpu}_i$.
    The controller can only assign a task to a CPU if it is guaranteed that the overload in the next step can only be caused by the other CPU.
    Therefore, the refined prophecy controller is correct for the new plant $\strat_\plant'$, and can be used to synthesize controllers for plants that implement similar behavior.
\end{example}

\begin{figure*}[!ht]
    \centering
    \includegraphics[width=0.3\textwidth]{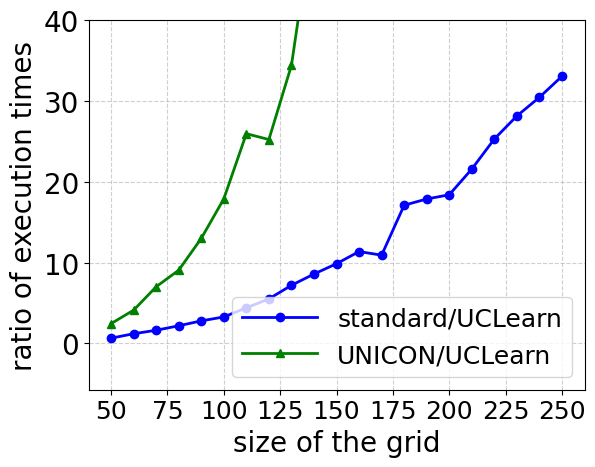}
    \hfill
    \includegraphics[width=0.3\textwidth]{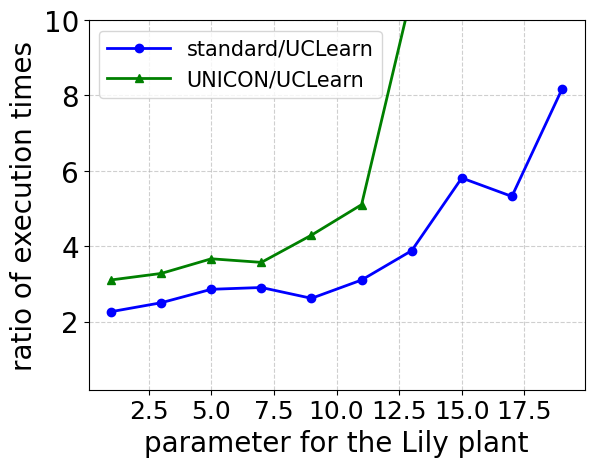}
    \hfill
    \includegraphics[width=0.305\textwidth]{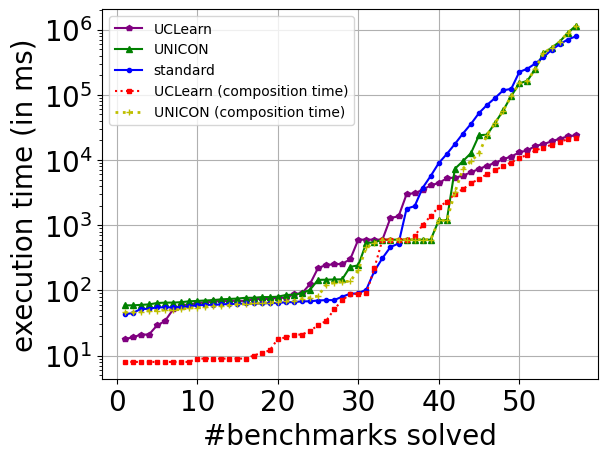}
    \caption{Experimental results showing scalability on the grid world (left) and lily (middle), and overall comparison (right). 
    }
    \label{fig:experiments}
\end{figure*}
\section{Experimental Evaluation}
\label{sec:experiments}
We implemented the algorithms presented in~\Cref{sec:learning,sec:approximations} in an F\#-based prototype tool called \prototype{}~\cite{uclearn} to assess the effectiveness of our method.
The prototype utilizes \textsc{spot}~\cite{spot} for LTL and automata operations, \textsc{oink}~\cite{DBLP:conf/tacas/Dijk18} for game solving, and \cite{learnCTL} for CTL learning.
Additionally, we use \unicon{}~\cite{tacaspaper} to compare our approach with the USC synthesis algorithm and the standard reactive synthesis algorithm.

As a baseline, we tested the running example from \cref{sec:introduction:example,ex:full} using \prototype{}, and the resulting prophecy controller with CTL prophecies are shown in \cref{fig:approx:controller,fig:refined:controller}.
This example illustrates that the learned prophecies are both simple and interpretable.

We further evaluated \prototype{} using a diverse set of benchmarks, including standard reactive synthesis benchmarks from SYNTCOMP~\cite{DBLP:journals/corr/abs-2206-00251} and a set of benchmarks based on a robot grid world, which are detailed below. All experiments were performed on an Apple M1 Pro 8-core CPU and 16GB of RAM. The presented execution times are averages over three runs.

\smallskip
\noindent\textbf{Scalability in Grid World.}
In this benchmark, the goal is to control a robot navigating an $n \times n$ grid. The plant encodes the grid structures with obstacles and the robot's position.
The robot can move in four directions (up, down, left, right), and the controller's task is to ensure that the robot can move in the grid while avoiding obstacles, encoded as a simple LTL formula: $\globally \neg \mathit{collision}$. We evaluate the performance of each approach on increasing grid sizes, and show the ratio of execution times between the existing approaches and \prototype{} in \cref{fig:experiments} (left). The results indicate that our approach is much more scalable and over an order of magnitude faster than the other approaches. This is not surprising as \unicon{} requires capturing all realizations of the plant, while \prototype{} only needs to capture the realizations that are similar to a given set of plants. Furthermore, it is worth noting that the approximation was learned from a single plant with grid size $2$ and consists of CTL formulas of size at most $2$. This demonstrates that our approach is capable of learning a suitable approximation from a small benchmark and that the resulting formulas are human-readable.

\smallskip
\noindent\textbf{Scalability in Lily.}
This benchmark, taken from SYNTCOMP~\cite{DBLP:journals/corr/abs-2206-00251}, requires the controller to grant or cancel a request from a user within 3 time steps and ensure that no two requests are granted consecutively. This is encoded as the LTL formula: $\globally (\mathit{request} \rightarrow \nextt (\mathit{grant} \lor \mathit{cancel} \lor \nextt (\mathit{grant} \lor \mathit{cancel} \lor \nextt (\mathit{grant} \lor \mathit{cancel})))) \land \globally (\mathit{grant} \rightarrow \nextt \neg \mathit{grant})$.
We evaluate all approaches on this benchmark with plants encoding increasing complexity of requests, and the results are shown in \cref{fig:experiments} (middle). As with the grid world benchmark, the approximation is learned from a single plant with parameter $2$, and the learning approach is up to 8 times faster than the standard approach.

\smallskip
\noindent\textbf{Adaptability in SYNTCOMP.}
In addition to demonstrating the scalability of our approach, we also evaluate its adaptability to different plants. We use the above benchmarks and a set of safety benchmarks from SYNTCOMP, each consisting of an assumption and a guarantee.
For such benchmarks, we obtain a plant satisfying the assumption and learn an approximation for the guarantee.
This learned approximation is then used to obtain a controller for a plant of similar size satisfying the assumption.
The results of this evaluation are shown in \cref{fig:experiments} (right).
The plot also compares the time required to compose the (already learned/constructed) prophecy controller with the plant, labeled as the composition time, for both \prototype{} and \unicon{}. 
These results indicate that \prototype{} adapts to changes in the plant model much faster than both the standard approach and \unicon{}.
This is due to the fact that the CTL formulas learned from the plant are concise and small (with a size of at most $4$), making them easily reusable for similar plants.

\section{Conclusion}

In this paper, we introduced a new method for universal controller synthesis, where the controllers are initially created independently of specific plants and are then refined for a given set of nominal plants in a subsequent step.
Universal controllers offer multiple advantages over standard controller synthesis approaches, including scalability, adaptability, and explainability.
We have proposed approximate universal controllers, which use prophecies tailored to a particular class of plants.
The prophecies are represented as CTL formulas and learned from positive and negative examples gathered during explicit controller construction.
Our experiments demonstrate that our approach outperforms existing universal controller synthesis methods whenever the learned prophecies are adequate for building an explicit solution.

\section*{Acknowledgments}
This work is funded by the DFG grant 389792660 as part of TRR 248 – CPEC, by the European Union with ERC
Grant HYPER (No. 101055412), and by the Emmy Noether Grant SCHM 3541/1-1.
Views and opinions expressed are however those of the author(s) only and do not necessarily reflect those of the European Union or the European Research Council Executive Agency. Neither the European Union nor the granting authority can be held responsible for them.

\bibliography{bib}

@inproceedings{spot,
  author    = {Alexandre Duret{-}Lutz and
               Etienne Renault and
               Maximilien Colange and
               Florian Renkin and
               Alexandre Gbaguidi Aisse and
               Philipp Schlehuber{-}Caissier and
               Thomas Medioni and
               Antoine Martin and
               J{\'{e}}r{\^{o}}me Dubois and
               Cl{\'{e}}ment Gillard and
               Henrich Lauko},
  title     = {From Spot 2.0 to Spot 2.10: What's New?},
  booktitle = {International Conference on Computer Aided Verification, {CAV} 2022},
  series    = {Lecture Notes in Computer Science},
  volume    = {13372},
  publisher = {Springer},
  year      = {2022},
  doi       = {10.1007/978-3-031-13188-2\_9}
}

@software{uclearn,
  author       = {Nayak, Satya Prakash and
                  Metzger, Niklas and
                  Schmuck, Anne-Kathrin and
                  Finkbeiner, Bernd},
  title        = {Artifact for "Universal Safety Controllers with
                   Learned Prophecies" at AAAI 2026
                  },
  month        = nov,
  year         = 2025,
  publisher    = {Zenodo},
  doi          = {10.5281/zenodo.17610623},
  url          = {https://doi.org/10.5281/zenodo.17610623},
  swhid        = {swh:1:dir:c06e5423497c4528b0d82810305aae5d6b500333
                   ;origin=https://doi.org/10.5281/zenodo.17610622;vi
                   sit=swh:1:snp:3df459a23839376d49410f6daf50f80fe21f
                   133f;anchor=swh:1:rel:1a865fd51a62919c0ab340ed79ca
                   bc7bcbcc7961;path=/
                  },
}

@inproceedings{DBLP:conf/tacas/Dijk18,
  author       = {Tom van Dijk},
  editor       = {Dirk Beyer and
                  Marieke Huisman},
  title        = {Oink: An Implementation and Evaluation of Modern Parity Game Solvers},
  booktitle    = {Tools and Algorithms for the Construction and Analysis of Systems
                  - 24th International Conference, {TACAS} 2018, Held as Part of the
                  European Joint Conferences on Theory and Practice of Software, {ETAPS}
                  2018, Thessaloniki, Greece, April 14-20, 2018, Proceedings, Part {I}},
  series       = {Lecture Notes in Computer Science},
  volume       = {10805},
  pages        = {291--308},
  publisher    = {Springer},
  year         = {2018},
  url          = {https://doi.org/10.1007/978-3-319-89960-2\_16},
  doi          = {10.1007/978-3-319-89960-2\_16},
  timestamp    = {Fri, 09 Apr 2021 18:45:37 +0200},
  biburl       = {https://dblp.org/rec/conf/tacas/Dijk18.bib},
  bibsource    = {dblp computer science bibliography, https://dblp.org}
}

@inproceedings{learnCTL,
  author       = {Benjamin Bordais and
                  Daniel Neider and
                  Rajarshi Roy},
  title        = {Learning Branching-Time Properties in {CTL} and {ATL} via Constraint
                  Solving},
  booktitle    = {{FM} {(1)}},
  series       = {Lecture Notes in Computer Science},
  volume       = {14933},
  pages        = {304--323},
  publisher    = {Springer},
  year         = {2024}
}

@book{baier2008principles,
	title={Principles of model checking},
	author={Baier, Christel and Katoen, Joost-Pieter},
	year={2008},
	publisher={MIT press}
}

@inproceedings{safeLtl,
  author       = {Timo Latvala},
  editor       = {Thomas Ball and
                  Sriram K. Rajamani},
  title        = {Efficient Model Checking of Safety Properties},
  booktitle    = {Model Checking Software, 10th International {SPIN} Workshop. Portland,
                  OR, USA, May 9-10, 2003, Proceedings},
  series       = {Lecture Notes in Computer Science},
  volume       = {2648},
  pages        = {74--88},
  publisher    = {Springer},
  year         = {2003},
  url          = {https://doi.org/10.1007/3-540-44829-2\_5},
  doi          = {10.1007/3-540-44829-2\_5},
  timestamp    = {Tue, 09 Jul 2024 07:54:49 +0200},
  biburl       = {https://dblp.org/rec/conf/spin/Latvala03.bib},
  bibsource    = {dblp computer science bibliography, https://dblp.org}
}

@inproceedings{AnandNS23,
  author       = {Ashwani Anand and
                  Satya Prakash Nayak and
                  Anne{-}Kathrin Schmuck},
  editor       = {Constantin Enea and
                  Akash Lal},
  title        = {Synthesizing Permissive Winning Strategy Templates for Parity Games},
  booktitle    = {Computer Aided Verification - 35th International Conference, {CAV}
                  2023, Paris, France, July 17-22, 2023, Proceedings, Part {I}},
  series       = {Lecture Notes in Computer Science},
  volume       = {13964},
  pages        = {436--458},
  publisher    = {Springer},
  year         = {2023},
  url          = {https://doi.org/10.1007/978-3-031-37706-8\_22},
  doi          = {10.1007/978-3-031-37706-8\_22},
  timestamp    = {Tue, 12 Sep 2023 07:57:21 +0200},
  biburl       = {https://dblp.org/rec/conf/cav/AnandNS23.bib},
  bibsource    = {dblp computer science bibliography, https://dblp.org}
}

@proceedings{automata-book,
  editor       = {Erich Gr{\"{a}}del and
                  Wolfgang Thomas and
                  Thomas Wilke},
  title        = {Automata, Logics, and Infinite Games: {A} Guide to Current Research
                  [outcome of a Dagstuhl seminar, February 2001]},
  series       = {Lecture Notes in Computer Science},
  volume       = {2500},
  publisher    = {Springer},
  year         = {2002},
  url          = {https://doi.org/10.1007/3-540-36387-4},
  doi          = {10.1007/3-540-36387-4},
  isbn         = {3-540-00388-6},
  timestamp    = {Tue, 14 May 2019 10:00:51 +0200},
  biburl       = {https://dblp.org/rec/conf/dagstuhl/2001automata.bib},
  bibsource    = {dblp computer science bibliography, https://dblp.org}
}

@article{bernet2002:permissiveStrategies,
	author    = {Julien Bernet and
	               David Janin and
	               Igor Walukiewicz},
	  title     = {Permissive strategies: from parity games to safety games},
	  journal   = {{RAIRO} Theor. Informatics Appl.},
	  volume    = {36},
	  number    = {3},
	  pages     = {261--275},
	  year      = {2002},
	  
	  doi       = {10.1051/ita:2002013},
	  timestamp = {Mon, 25 May 2020 09:15:57 +0200},
	  biburl    = {https://dblp.org/rec/journals/ita/BernetJW02.bib},
	  bibsource = {dblp computer science bibliography, https://dblp.org}
}

@InProceedings{bouyer2011:measuringpermissiveness,
	author    = {Patricia Bouyer and
	               Nicolas Markey and
	               J{\"{o}}rg Olschewski and
	               Michael Ummels},
	  editor    = {Tevfik Bultan and
	               Pao{-}Ann Hsiung},
	  title     = {Measuring Permissiveness in Parity Games: Mean-Payoff Parity Games
	               Revisited},
	  booktitle = {Automated Technology for Verification and Analysis, 9th International
	               Symposium, {ATVA} 2011, Taipei, Taiwan, October 11-14, 2011. Proceedings},
	  series    = {Lecture Notes in Computer Science},
	  volume    = {6996},
	  pages     = {135--149},
	  publisher = {Springer},
	  year      = {2011},
	  
	  doi       = {10.1007/978-3-642-24372-1\_11},
	  timestamp = {Tue, 14 May 2019 10:00:49 +0200},
	  biburl    = {https://dblp.org/rec/conf/atva/BouyerMOU11.bib},
	  bibsource = {dblp computer science bibliography, https://dblp.org}
	}

@inproceedings{DammF14,
  author       = {Werner Damm and
                  Bernd Finkbeiner},
  editor       = {Cliff B. Jones and
                  Pekka Pihlajasaari and
                  Jun Sun},
  title        = {Automatic Compositional Synthesis of Distributed Systems},
  booktitle    = {{FM} 2014: Formal Methods - 19th International Symposium, Singapore,
                  May 12-16, 2014. Proceedings},
  series       = {Lecture Notes in Computer Science},
  volume       = {8442},
  pages        = {179--193},
  publisher    = {Springer},
  year         = {2014},
  url          = {https://doi.org/10.1007/978-3-319-06410-9\_13},
  doi          = {10.1007/978-3-319-06410-9\_13},
  timestamp    = {Tue, 14 May 2019 10:00:46 +0200},
  biburl       = {https://dblp.org/rec/conf/fm/DammF14.bib},
  bibsource    = {dblp computer science bibliography, https://dblp.org}
}

@inproceedings{Berwanger07,
  author       = {Dietmar Berwanger},
  editor       = {Wolfgang Thomas and
                  Pascal Weil},
  title        = {Admissibility in Infinite Games},
  booktitle    = {{STACS} 2007, 24th Annual Symposium on Theoretical Aspects of Computer
                  Science, Aachen, Germany, February 22-24, 2007, Proceedings},
  series       = {Lecture Notes in Computer Science},
  volume       = {4393},
  pages        = {188--199},
  publisher    = {Springer},
  year         = {2007},
  url          = {https://doi.org/10.1007/978-3-540-70918-3\_17},
  doi          = {10.1007/978-3-540-70918-3\_17},
  timestamp    = {Tue, 14 May 2019 10:00:48 +0200},
  biburl       = {https://dblp.org/rec/conf/stacs/Berwanger07.bib},
  bibsource    = {dblp computer science bibliography, https://dblp.org}
}

@inproceedings{AnandMNS23,
  author       = {Ashwani Anand and
                  Kaushik Mallik and
                  Satya Prakash Nayak and
                  Anne{-}Kathrin Schmuck},
  editor       = {Sriram Sankaranarayanan and
                  Natasha Sharygina},
  title        = {Computing Adequately Permissive Assumptions for Synthesis},
  booktitle    = {Tools and Algorithms for the Construction and Analysis of Systems
                  - 29th International Conference, {TACAS} 2023, Held as Part of the
                  European Joint Conferences on Theory and Practice of Software, {ETAPS}
                  2022, Paris, France, April 22-27, 2023, Proceedings, Part {II}},
  series       = {Lecture Notes in Computer Science},
  volume       = {13994},
  pages        = {211--228},
  publisher    = {Springer},
  year         = {2023},
  url          = {https://doi.org/10.1007/978-3-031-30820-8\_15},
  doi          = {10.1007/978-3-031-30820-8\_15},
  timestamp    = {Sat, 13 May 2023 01:07:18 +0200},
  biburl       = {https://dblp.org/rec/conf/tacas/AnandMNS23.bib},
  bibsource    = {dblp computer science bibliography, https://dblp.org}
}

@inproceedings{BeutnerF22,
  author       = {Raven Beutner and
                  Bernd Finkbeiner},
  title        = {Prophecy Variables for Hyperproperty Verification},
  booktitle    = {35th {IEEE} Computer Security Foundations Symposium, {CSF} 2022, Haifa,
                  Israel, August 7-10, 2022},
  pages        = {471--485},
  publisher    = {{IEEE}},
  year         = {2022},
  url          = {https://doi.org/10.1109/CSF54842.2022.9919658},
  doi          = {10.1109/CSF54842.2022.9919658},
  timestamp    = {Fri, 11 Nov 2022 16:53:14 +0100},
  biburl       = {https://dblp.org/rec/conf/csfw/BeutnerF22.bib},
  bibsource    = {dblp computer science bibliography, https://dblp.org}
}

@article{AbadiL91,
  author       = {Mart{\'{\i}}n Abadi and
                  Leslie Lamport},
  title        = {The Existence of Refinement Mappings},
  journal      = {Theor. Comput. Sci.},
  volume       = {82},
  number       = {2},
  pages        = {253--284},
  year         = {1991},
  url          = {https://doi.org/10.1016/0304-3975(91)90224-P},
  doi          = {10.1016/0304-3975(91)90224-P},
  timestamp    = {Wed, 17 Feb 2021 21:59:17 +0100},
  biburl       = {https://dblp.org/rec/journals/tcs/AbadiL91.bib},
  bibsource    = {dblp computer science bibliography, https://dblp.org}
}

@inproceedings{Pnueli77,
  author    = {Amir Pnueli},
  title     = {The Temporal Logic of Programs},
  booktitle = {18th Annual Symposium on Foundations of Computer Science, Providence,
               Rhode Island, USA, 31 October - 1 November 1977},
  pages     = {46--57},
  publisher = {{IEEE} Computer Society},
  year      = {1977},
  url       = {https://doi.org/10.1109/SFCS.1977.32},
  doi       = {10.1109/SFCS.1977.32},
  timestamp = {Wed, 16 Oct 2019 14:14:54 +0200},
  biburl    = {https://dblp.org/rec/conf/focs/Pnueli77.bib},
  bibsource = {dblp computer science bibliography, https://dblp.org}
}

@Article{Klein2015:mostGeneralController,
	author    = {Joachim Klein and
	               Christel Baier and
	               Sascha Kl{\"{u}}ppelholz},
	  title     = {Compositional construction of most general controllers},
	  journal   = {Acta Informatica},
	  volume    = {52},
	  number    = {4-5},
	  pages     = {443--482},
	  year      = {2015},
	  
	  doi       = {10.1007/s00236-015-0239-9},
	  timestamp = {Sun, 21 Jun 2020 17:38:12 +0200},
	  biburl    = {https://dblp.org/rec/journals/acta/0001BK15.bib},
	  bibsource = {dblp computer science bibliography, https://dblp.org}
}

@inproceedings{FinkbeinerP22,
  author       = {Bernd Finkbeiner and
                  Noemi Passing},
  editor       = {Anuj Dawar and
                  Venkatesan Guruswami},
  title        = {Synthesizing Dominant Strategies for Liveness},
  booktitle    = {42nd {IARCS} Annual Conference on Foundations of Software Technology
                  and Theoretical Computer Science, {FSTTCS} 2022, December 18-20, 2022,
                  {IIT} Madras, Chennai, India},
  series       = {LIPIcs},
  volume       = {250},
  pages        = {37:1--37:19},
  publisher    = {Schloss Dagstuhl - Leibniz-Zentrum f{\"{u}}r Informatik},
  year         = {2022},
  url          = {https://doi.org/10.4230/LIPIcs.FSTTCS.2022.37},
  doi          = {10.4230/LIPICS.FSTTCS.2022.37},
  timestamp    = {Wed, 21 Aug 2024 22:46:00 +0200},
  biburl       = {https://dblp.org/rec/conf/fsttcs/FinkbeinerP22.bib},
  bibsource    = {dblp computer science bibliography, https://dblp.org}
}

@article{BrenguierRS17,
  author       = {Romain Brenguier and
                  Jean{-}Fran{\c{c}}ois Raskin and
                  Ocan Sankur},
  title        = {Assume-admissible synthesis},
  journal      = {Acta Informatica},
  volume       = {54},
  number       = {1},
  pages        = {41--83},
  year         = {2017},
  url          = {https://doi.org/10.1007/s00236-016-0273-2},
  doi          = {10.1007/S00236-016-0273-2},
  timestamp    = {Sun, 21 Jun 2020 17:38:01 +0200},
  biburl       = {https://dblp.org/rec/journals/acta/BrenguierRS17.bib},
  bibsource    = {dblp computer science bibliography, https://dblp.org}
}

@inproceedings{BassetRS17,
  author       = {Nicolas Basset and
                  Jean{-}Fran{\c{c}}ois Raskin and
                  Ocan Sankur},
  editor       = {Luca Aceto and
                  Giorgio Bacci and
                  Giovanni Bacci and
                  Anna Ing{\'{o}}lfsd{\'{o}}ttir and
                  Axel Legay and
                  Radu Mardare},
  title        = {Admissible Strategies in Timed Games},
  booktitle    = {Models, Algorithms, Logics and Tools - Essays Dedicated to Kim Guldstrand
                  Larsen on the Occasion of His 60th Birthday},
  series       = {Lecture Notes in Computer Science},
  volume       = {10460},
  pages        = {403--425},
  publisher    = {Springer},
  year         = {2017},
  url          = {https://doi.org/10.1007/978-3-319-63121-9\_20},
  doi          = {10.1007/978-3-319-63121-9\_20},
  timestamp    = {Tue, 14 May 2019 10:00:52 +0200},
  biburl       = {https://dblp.org/rec/conf/birthday/BassetRS17.bib},
  bibsource    = {dblp computer science bibliography, https://dblp.org}
}

@inproceedings{neider2018learning,
  title={Learning linear temporal properties},
  author={Neider, Daniel and Gavran, Ivan},
  booktitle={2018 Formal Methods in Computer Aided Design (FMCAD)},
  pages={1--10},
  year={2018},
  organization={IEEE}
}

@inproceedings{BalachanderFR23,
  author       = {Mrudula Balachander and
                  Emmanuel Filiot and
                  Jean{-}Fran{\c{c}}ois Raskin},
  title        = {{LTL} Reactive Synthesis with a Few Hints},
  booktitle    = {{TACAS} {(2)}},
  series       = {Lecture Notes in Computer Science},
  volume       = {13994},
  pages        = {309--328},
  publisher    = {Springer},
  year         = {2023}
}

@inproceedings{KretinskyMPZ25,
  author       = {Jan Kret{\'{\i}}nsk{\'{y}} and
                  Tobias Meggendorfer and
                  Maximilian Prokop and
                  Ashkan Zarkhah},
  title        = {SemML: Enhancing Automata-Theoretic {LTL} Synthesis with Machine Learning},
  booktitle    = {{TACAS} {(1)}},
  series       = {Lecture Notes in Computer Science},
  volume       = {15696},
  pages        = {233--253},
  publisher    = {Springer},
  year         = {2025}
}

@inproceedings{pommellet2024sat,
  title={Sat-based learning of computation tree logic},
  author={Pommellet, Adrien and Stan, Daniel and Scatton, Simon},
  booktitle={International Joint Conference on Automated Reasoning},
  pages={366--385},
  year={2024},
  organization={Springer}
}

@inproceedings{valizadeh2024ltl,
  title={Ltl learning on gpus},
  author={Valizadeh, Mojtaba and Fijalkow, Nathana{\"e}l and Berger, Martin},
  booktitle={International Conference on Computer Aided Verification},
  pages={209--231},
  year={2024},
  organization={Springer}
}

@inproceedings{raha2023synthesizing,
  title={Synthesizing efficiently monitorable formulas in metric temporal logic},
  author={Raha, Ritam and Roy, Rajarshi and Fijalkow, Nathana{\"e}l and Neider, Daniel and P{\'e}rez, Guillermo A},
  booktitle={International Conference on Verification, Model Checking, and Abstract Interpretation},
  pages={264--288},
  year={2023},
  organization={Springer}
}

@article{DBLP:journals/corr/abs-2206-00251,
  author       = {Swen Jacobs and
                  Guillermo A. P{\'{e}}rez and
                  Remco Abraham and
                  V{\'{e}}ronique Bruy{\`{e}}re and
                  Micha{\"{e}}l Cadilhac and
                  Maximilien Colange and
                  Charly Delfosse and
                  Tom van Dijk and
                  Alexandre Duret{-}Lutz and
                  Peter Faymonville and
                  Bernd Finkbeiner and
                  Ayrat Khalimov and
                  Felix Klein and
                  Michael Luttenberger and
                  Klara J. Meyer and
                  Thibaud Michaud and
                  Adrien Pommellet and
                  Florian Renkin and
                  Philipp Schlehuber{-}Caissier and
                  Mouhammad Sakr and
                  Salomon Sickert and
                  Ga{\"{e}}tan Staquet and
                  Cl{\'{e}}ment Tamines and
                  Leander Tentrup and
                  Adam Walker},
  title        = {The Reactive Synthesis Competition {(SYNTCOMP):} 2018-2021},
  journal      = {CoRR},
  volume       = {abs/2206.00251},
  year         = {2022},
  url          = {https://doi.org/10.48550/arXiv.2206.00251},
  doi          = {10.48550/ARXIV.2206.00251},
  eprinttype    = {arXiv},
  eprint       = {2206.00251},
  timestamp    = {Mon, 26 Sep 2022 16:14:00 +0200},
  biburl       = {https://dblp.org/rec/journals/corr/abs-2206-00251.bib},
  bibsource    = {dblp computer science bibliography, https://dblp.org}
}

@InProceedings{FremontSeshi_Improvisation,
author="Fremont, Daniel J.
and Seshia, Sanjit A.",
editor="Chockler, Hana
and Weissenbacher, Georg",
title="Reactive Control Improvisation",
booktitle="Computer Aided Verification",
year="2018",
publisher="Springer International Publishing",
address="Cham",
pages="307--326",
isbn="978-3-319-96145-3"
}

@book{tabuada2009verification,
  title={Verification and control of hybrid systems: a symbolic approach},
  author={Tabuada, Paulo},
  year={2009},
  publisher={Springer Science \& Business Media}
}

@article{Review24_FMandControl_YinGaoYu,
title = {Formal synthesis of controllers for safety-critical autonomous systems: Developments and challenges},
journal = {Annual Reviews in Control},
volume = {57},
pages = {100940},
year = {2024},
issn = {1367-5788},
doi = {https://doi.org/10.1016/j.arcontrol.2024.100940},
url = {https://www.sciencedirect.com/science/article/pii/S1367578824000099},
author = {Xiang Yin and Bingzhao Gao and Xiao Yu},
keywords = {Autonomous systems, Safety critical, Formal methods, Correct-by-construction synthesis},
}

@book{belta2017formal,
  title={Formal methods for discrete-time dynamical systems},
  author={Belta, Calin and Yordanov, Boyan and Gol, Ebru Aydin},
  volume={89},
  year={2017},
  publisher={Springer}
}

@book{cassandras2021introduction,
  title={Introduction to Discrete Event Systems},
  author={Cassandras, Christos G and Lafortune, St{\'e}phane},
  year={2021},
  publisher={Springer Nature}
}

@inproceedings{DBLP:conf/lop/ClarkeE81,
  author       = {Edmund M. Clarke and
                  E. Allen Emerson},
  editor       = {Dexter Kozen},
  title        = {Design and Synthesis of Synchronization Skeletons Using Branching-Time
                  Temporal Logic},
  booktitle    = {Logics of Programs, Workshop, Yorktown Heights, New York, USA, May
                  1981},
  series       = {Lecture Notes in Computer Science},
  volume       = {131},
  pages        = {52--71},
  publisher    = {Springer},
  year         = {1981},
  url          = {https://doi.org/10.1007/BFb0025774},
  doi          = {10.1007/BFB0025774},
  timestamp    = {Tue, 14 May 2019 10:00:52 +0200},
  biburl       = {https://dblp.org/rec/conf/lop/ClarkeE81.bib},
  bibsource    = {dblp computer science bibliography, https://dblp.org}
}

@article{DBLP:journals/toplas/ClarkeES86,
  author       = {Edmund M. Clarke and
                  E. Allen Emerson and
                  A. Prasad Sistla},
  title        = {Automatic Verification of Finite-State Concurrent Systems Using Temporal
                  Logic Specifications},
  journal      = {{ACM} Trans. Program. Lang. Syst.},
  volume       = {8},
  number       = {2},
  pages        = {244--263},
  year         = {1986},
  url          = {https://doi.org/10.1145/5397.5399},
  doi          = {10.1145/5397.5399},
  timestamp    = {Wed, 14 Nov 2018 10:30:40 +0100},
  biburl       = {https://dblp.org/rec/journals/toplas/ClarkeES86.bib},
  bibsource    = {dblp computer science bibliography, https://dblp.org}
}

@inproceedings{tacaspaper,
  author       = {Bernd Finkbeiner and
                  Niklas Metzger and
                  Satya Prakash Nayak and
                  Anne{-}Kathrin Schmuck},
  editor       = {Arie Gurfinkel and
                  Marijn Heule},
  title        = {Synthesis of Universal Safety Controllers},
  booktitle    = {Tools and Algorithms for the Construction and Analysis of Systems
                  - 31st International Conference, {TACAS} 2025, Held as Part of the
                  International Joint Conferences on Theory and Practice of Software,
                  {ETAPS} 2025, Hamilton, ON, Canada, May 3-8, 2025, Proceedings, Part
                  {II}},
  series       = {Lecture Notes in Computer Science},
  volume       = {15697},
  pages        = {177--197},
  publisher    = {Springer},
  year         = {2025},
  url          = {https://doi.org/10.1007/978-3-031-90653-4\_9},
  doi          = {10.1007/978-3-031-90653-4\_9},
  timestamp    = {Wed, 11 Jun 2025 21:00:06 +0200},
  biburl       = {https://dblp.org/rec/conf/tacas/FinkbeinerMNS25.bib},
  bibsource    = {dblp computer science bibliography, https://dblp.org}
}

\end{document}